\newtheorem{theorem}{Theorem}
\newtheorem{lemma}[theorem]{Lemma}
\newtheorem{corollary}[theorem]{Corollary}
\newtheorem{example}[theorem]{Example}
\newcommand{\ord}{{\mathrm{ord}}}
\newcommand{\tr}{{\mathrm{Tr}}}
\newcommand{\gf}{{\mathrm{GF}}}
\newcommand{\PG}{{\mathrm{PG}}}
\newcommand{\Exp}{{\mathrm{Exp}}}
\newcommand{\QR}{{\mathrm{QR}}} 
\newcommand{\QN}{{\mathrm{QN}}}
\newcommand{\Z}{\mathbb{{Z}}}
\newcommand{\m}{\mathbb{M}}
\newcommand{\C}{{\mathcal{C}}}
\newcommand{\bc}{{\mathbf{c}}}
\providecommand{\keywords}[1]
{
\textbf{\text{Keywords:}}#1
}
\newcommand{\Rmnum}[1]{\expandafter\@slowromancap\romannumeral #1@}
\begin{document} 

\title{Several Families of Irreducible Constacyclic and Cyclic Codes\thanks{
Z. Sun's research was supported by The National Natural Science Foundation of China under Grant Number  62002093. 
X. Wang's research was supported by The National Natural Science Foundation of China under Grant Number  12001175. 
C. Ding's research was supported by The Hong Kong Research Grants Council, Proj. No. $16301522$}}

\author{Zhonghua Sun\thanks{School of Mathematics, Hefei University of Technology, Hefei, 230601, Anhui, China. Email:  sunzhonghuas@163.com}, 
\and Xiaoqiang Wang\thanks{Hubei Key Laboratory of Applied Mathematics, Faculty of Mathematics and Statistics, Hubei University, Wuhan 430062, China.  
Email: waxiqq@163.com}, 
\and Cunsheng Ding\thanks{Department of Computer Science
                           and Engineering, The Hong Kong University of Science and Technology,
Clear Water Bay, Kowloon, Hong Kong, China. Email: cding@ust.hk}
}

\maketitle

\begin{abstract} 
In this paper, several families of irreducible constacyclic codes over finite fields and their duals are studied. The weight distributions of these irreducible constacyclic codes and the parameters of their duals are settled. Several families of irreducible constacyclic codes with a few weights and several families of optimal constacyclic codes are constructed. As by-products, a family of $[2n, (n-1)/2, d \geq 2(\sqrt{n}+1)]$ irreducible cyclic codes over $\gf(q)$ and a family of $[(q-1)n, (n-1)/2, d \geq (q-1)(\sqrt{n}+1)]$ irreducible cyclic codes over $\gf(q)$ are presented, where $n$ is a prime such that $\ord_n(q)=(n-1)/2$. The results in this paper complement earlier works on irreducible constacyclic and cyclic codes over finite fields. 
\end{abstract}

\keywords{~Constacyclic codes; Irreducible constacyclic codes; Irreducible cyclic codes; Weight distribution}

\section{Introduction and motivations} 

\subsection{Constacyclic codes and cyclic codes}

Let $\gf(q)$ denote the finite field with $q$ elements, and let $\gf(q)^*$ denote the multiplicative group of $\gf(q)$. An $[n, k, d]$ code $\C$ over $\gf(q)$ is a $k$-dimensional linear subspace of $\gf(q)^n$ with minimum distance $d$. Let $\lambda \in \gf(q)^*$. A linear code $\C$ of length $n$ is said to be $\lambda$-{\it constacyclic} if $(c_0,c_1,\cdots,c_{n-1})\in \C$ implies $(\lambda c_{n-1}, c_0,c_1,\cdots,c_{n-2})\in \C$. Let $\Phi$ be the mapping from $\gf(q)^n$ to the quotient ring $\gf(q)[x]/( x^n-\lambda)$ defined by 
\[\Phi((c_0,c_1, \cdots, c_{n-1})) = c_0+c_1x+c_2x^2+ \cdots + c_{n-1}x^{n-1}.\]  
It is known that every ideal of $\gf(q)[x]/(x^n-\lambda)$ is {\it principal} and a linear code $\C \subset \gf(q)^n$ is 
$\lambda$-constacyclic if and only if $\Phi(\C)$ is an ideal of $\gf(q)[x]/(x^n-\lambda)$. Due to this, we will 
identify $\C$ with $\Phi(\C)$ for any $\lambda$-constacyclic code $\C$. 
Let $\C=( g(x))$ be a 
$\lambda$-constacyclic code over $\gf(q)$, where $g(x)$ is monic and has the smallest degree. Then $g(x)$ 
is called the {\it generator polynomial} and $h(x)=(x^n-\lambda)/g(x)$ is referred to as the {\it check polynomial} of $\C$. A $\lambda$-constacyclic code $\C$ over $\gf(q)$ is said to be {\it irreducible } if its check polynomial is irreducible over $\gf(q)$. The dual code $\C^\perp$ of $\C$ is generated by the reciprocal polynomial of the check polynomial $h(x)$ of $\C$. By definition, $1$-constacyclic codes are the classical cyclic codes. Hence, cyclic codes form a subclass of constacyclic codes. In other words, constacyclic codes are a generalisation of the classical cyclic codes. For more information on constacyclic codes, the reader is referred to \cite{Black66,CDFL15,CFLL12,DP92,DDR11,DY10,KS90,LLLM17,LQL2017,LQ2018,MC21,PD91,WSZ19,Wolfmann2008,SZW20} and the references therein.

For a linear code $\C \subseteq \gf(q)^n$, let $A_i$ denote the number of codewords with Hamming weight $i$ in $\mathcal{C}$. The {\it weight enumerator} of $\C$ is defined as $1+A_1z+\cdots+A_nz^n$. The sequence $(1,A_1,\cdots,A_n)$ is called the {\it weight distribution} of $\C$. If the number of nonzero $A_i$ in the sequence $(A_1,A_2,\cdots, A_n)$ equals $N$, then $\C$ is called an $N$-weight code. The weight distribution of a code contains important information on its error detection and correction with respect to some algorithms \cite{Klove2007}. It is well known that determining the weight distribution of a linear code is a difficult work in general, and there are a lot of references on the weight distribution of cyclic codes. For the weight distribution of irreducible cyclic codes, the reader is referred to \cite{dingyang2013} and the references therein.   

\subsection{Motivations and objectives}

 Constacyclic codes over finite fields are of theoretical importance as they are closely related to a number of areas of mathematics such as algebra, algebraic geometry, graph theory, combinatorial designs and number theory. The following remarks show that constacyclic codes have advantages over cyclic codes.
 \begin{enumerate}
  \item MDS cyclic codes over $\gf(q)$ with length $q+1$ and even dimension do not exist \cite{Geor82}, but MDS constacyclic codes over $\gf(q)$ with length $q+1$ and even dimension do exit \cite{KS90}.
  \item The Hamming code over $\gf(q)$ with parameters $\left[\frac{q^m-1}{q-1}, \frac{q^m-1}{q-1}-m, 3\right]$, here and hereafter denoted by $\text{Hamming}(q,m)$, is a perfect code. The dual of a Hamming code is the Simplex code, denoted by ${\rm Simplex}(q,m)$. The Simplex code is a one-weight code and is optimal in the sense that it meets the Griesmer bound. If $\gcd(m,q-1)=1$, the  Hamming code ${\rm Hamming}(q,m)$ is monomially-equivalent to the dual code of a cyclic code over $\gf(q)$, and the  Simplex code ${\rm Simplex}(q,m)$ is monomially-equivalent to a cyclic code over $\gf(q)$. However, this conclusion is not true when $\gcd(q-1,m)>1$. For example, the $[4,2,3]$ ternary Hamming code documented in \cite[Example 5.1.6]{HP2003} is not monomially-equivalent to a ternary cyclic code. But there is an irreducible $\lambda$-constacyclic code over $\gf(q)$ which is monomially-equivalent to the Simplex code ${\rm Simplex}(q,m)$, and its dual code is monomially-equivalent to the Hamming code $\text{Hamming}(q,m)$ \cite{FWF2017,HD2019}, where $\lambda$ is a primitive element of $\gf(q)$. 
  \item Any $[q^2+1,4,q^2-q]$ code over $\gf(q)$ is called an {\it ovoid code}. Ovoid codes are two-weight codes, and are optimal in the sense that they meet the Griesmer bound. The dual code of a $[q^2+1,4,q^2-q]$ ovoid code has parameters $[q^2+1,q^2-3,4]$ and is distance-optimal. Ovoid codes and their duals can be employed to construct $3$-designs and inversive planes \cite[Chapter 13]{dingtang2022}. Some subfield codes of ovoid codes are optimal \cite{dingheng2019}. It is known that an ovoid code corresponds to an ovoid in the projective space $\PG(3, \gf(q))$. So far the elliptic quadrics and the Tits ovoids are the only known two families of ovoids up to equivalence. The elliptic quadric in $\PG(3, \gf(2^m))$ can be constructed with an irreducible cyclic code over $\gf(2^m)$ \cite{AH2021,dingtang2022}. However, the elliptic quadric in $\PG(3, \gf(p^m))$ for odd $p$ cannot be constructed with any cyclic code over $\gf(p^m)$, but can be constructed with an irreducible constacyclic code \cite[Chapter 13]{dingtang2022}.  
 \end{enumerate}
The above remarks show that constacyclic codes can do certain things that cyclic codes cannot. Therefore, it is very interesting to study constacyclic codes.  
  
Constacyclic codes over finite fields are a subclass of the \emph{pseudo-cyclic codes} over finite fields defined in \cite[Section 8.10]{PW72}, 
and were studied under the name of pseudo-cyclic codes. Negacyclic codes over finite fields are a subclass of constacyclic codes, and were first studied by Berlekamp \cite{Black66} for correcting errors measured in the Lee metric. Therefore, the history of constacyclic codes goes back to 1966. In the past 56 years, only a few references on irreducible constacyclic codes have appeared in the literature \cite{HD2019,LQL2017,LQ2018,Wolfmann2008,SR2018,SF2020,Singh2018,ZSL18,SZW20}. It was shown in \cite{SR2018}that the weight distributions of irreducible constacyclic codes can be described in terms of the Gaussian periods of certain order $L$. However, Gaussian periods have been evaluated only for a few orders $L$. Hence, very limited results on the parameters and weight distributions of irreducible constacyclic codes over finite fields are known in the literature.   

The objectives of this paper are the following: 
\begin{enumerate} 
\item Study parameters of some families of irreducible constacyclic codes. 
\item Study parameters of the dual of these irreducible constacyclic codes.
\end{enumerate} 
The weight distributions of several families of irreducible constacyclic codes and the parameters of their duals are settled. Several infinite families of optimal constacyclic codes are presented in this paper. As by-products, a family of $[2n, (n-1)/2, d \geq 2(\sqrt{n}+1)]$ irreducible cyclic codes over $\gf(q)$ and a family of $[(q-1)n, (n-1)/2, d \geq (q-1)(\sqrt{n}+1]$ irreducible cyclic codes over $\gf(q)$ are constructed, where $n$ is a prime such that $\ord_n(q)=(n-1)/2$. 

 \subsection{The organisation of this paper}

The rest of this paper is organized as follows. In Section \ref{sec2}, we present some auxiliary results. In Section \ref{sec3}, we associate three linear codes to an irreducible constacyclic code and establish relations among the four codes. In Section \ref{sec-ding-1}, we study the parameters of irreducible constacyclic codes, and determine the weight distributions of some families of irreducible constacyclic codes. In Section \ref{sec-ding1}, we present a family of $[2n, (n-1)/2, d \geq 2(\sqrt{n}+1)]$ irreducible cyclic codes. In Section \ref{sec-ding2}, we document a family of $[(q-1)n, (n-1)/2, d \geq (q-1)( \sqrt{n}+1)]$ irreducible cyclic codes. In Section \ref{sec:7}, we present the objective of studying the two families of constacyclic codes in Sections \ref{sec-ding1} and \ref{sec-ding2}. In Section \ref{sec4}, we conclude this paper and make some concluding remarks.

\section{Preliminaries}\label{sec2}

\subsection{Cyclotomic cosets} 

Let $q$ be a prime power, $n$ be a positive integer with $\gcd(q, n)=1$, $r$ be a positive divisor of $q-1$, and let $\lambda$ be an element of $\gf(q)$ with order $r$. To deal with $\lambda$-constacyclic codes of length $n$ over $\gf(q)$, we have to study the factorization of $x^n-\lambda$ over $\gf(q)$. To this end, we need to introduce $q$-cyclotomic cosets modulo $rn$. 

Let $\Z_{rn}=\{0,1,2,\cdots,rn-1\}$ be the ring of integers modulo $rn$. For any $i \in \Z_{rn}$, the \emph{$q$-cyclotomic coset of $i$ modulo $rn$} is defined by 
\[C^{(q,rn)}_i=\{i, iq, iq^2, \cdots, iq^{\ell_i-1}\} \bmod {rn} \subseteq \Z_{rn}, \]
where $\ell_i$ is the smallest positive integer such that $i \equiv i q^{\ell_i} \pmod{rn}$, and is the \textit{size} of the $q$-cyclotomic coset. The smallest integer in $C^{(q, rn)}_i$ is called the \textit{coset leader} of $C^{(q, rn)}_i$. Let $\Gamma_{(q,rn)}$ be the set of all the coset leaders. We have then $C^{(q, rn)}_i \cap C^{(q, rn)}_j = \emptyset$ for any two distinct elements $i$ and $j$ in  $\Gamma_{(q,rn)}$, and  
 $\bigcup_{i \in  \Gamma_{(q,rn)} } C_i^{(q,rn)}=\Z_{rn}$.

Let $m=\ord_{r n}(q)$. It is easily seen that there is a primitive element $\alpha$ of $\gf(q^m)$ such that $\beta=\alpha^{(q^m-1)/rn}$ and $\beta^n=\lambda$. Then $\beta$ is a primitive $rn$-th root of unity in $\gf(q^m)$. 
 The \textit{minimal polynomial} $\m_{\beta^i}(x)$ of $\beta^i$ over $\gf(q)$ is a monic polynomial of the smallest degree over $\gf(q)$ with $\beta^i$ as a zero. We have $
\m_{\beta^i}(x)=\prod_{j \in C_i^{(q,rn)}} (x-\beta^j) \in \gf(q)[x], 
$ 
which is irreducible over $\gf(q)$. It then follows that $ x^{rn}-1=x^{rn}-\lambda^r=\prod_{i \in  \Gamma_{(q,rn)}} \m_{\beta^i}(x)$. Define 
$$\Gamma_{(q,rn)}^{(1)}=\{i: i \in \Gamma_{(q,rn)}, \, i \equiv 1 \pmod{r} \}.$$ Then $x^{n}-\lambda=\prod_{i \in  \Gamma_{(q,rn)}^{(1)}} \m_{\beta^i}(x)$.

\begin{lemma}\label{lem1} 
	Let $n$ be a positive integer with $\gcd(q,n)=1$ and let $r$ be a positive divisor of $q-1$. If $\ord_n(q)=\ell$, then $\ord_{r n}(q)=\frac{r}{\gcd(\frac{q^\ell-1}{n},r)}\ell$, which is the size $\ell_1$ of $C_1^{(q, rn)}$, and the size $\ell_i$ of each $q$-cyclotomic coset $C_i^{(q, rn)}$ is a divisor of $\ord_{rn}(q)$.
\end{lemma}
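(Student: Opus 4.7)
The plan is to dispatch the three claims in order. First, by the definition of cyclotomic coset size given in the excerpt, $\ell_1$ is the smallest positive integer with $q^{\ell_1}\equiv 1\pmod{rn}$, which is exactly $\ord_{rn}(q)$; so the equality of $\ell_1$ and $\ord_{rn}(q)$ is free once the closed form is established. Second, the divisibility $\ell_i\mid\ord_{rn}(q)$ is immediate: setting $M=\ord_{rn}(q)$, we have $q^M\equiv 1\pmod{rn}$, whence $iq^M\equiv i\pmod{rn}$, so by the minimality of $\ell_i$ we conclude $\ell_i\mid M$.

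The core task is therefore to compute $\ord_{rn}(q)$. Since $n\mid rn$, the standard observation that a reduction cannot increase the order gives $\ell=\ord_n(q)\mid\ord_{rn}(q)$, so I may write $\ord_{rn}(q)=\ell k$ and must determine the smallest $k\geq 1$ with $rn\mid q^{\ell k}-1$. Setting $T=(q^\ell-1)/n$, so that $q^\ell-1=nT$, I will exploit the geometric-series factorisation
\[
q^{\ell k}-1=(q^\ell-1)\bigl(1+q^\ell+q^{2\ell}+\cdots+q^{(k-1)\ell}\bigr)=nT\,S_k,
\]
where $S_k:=1+q^\ell+q^{2\ell}+\cdots+q^{(k-1)\ell}$. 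Thus $rn\mid q^{\ell k}-1$ if and only if $r\mid T\,S_k$.

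Now the hypothesis $r\mid q-1$ gives $q\equiv 1\pmod r$, hence $q^{j\ell}\equiv 1\pmod r$ for every $j$, and summing yields $S_k\equiv k\pmod r$. Consequently $T\,S_k\equiv Tk\pmod r$, so $r\mid T\,S_k$ is equivalent to $r\mid Tk$. Writing $d=\gcd(T,r)$, $r=dr'$, $T=dT'$ with $\gcd(r',T')=1$, the smallest such $k$ is $k=r/d$, which yields
\[
\ord_{rn}(q)=\frac{r}{\gcd\!\bigl((q^\ell-1)/n,\,r\bigr)}\cdot\ell,
\]
as required.

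The step I expect to merit the most attention is the congruence $S_k\equiv k\pmod r$: it is the one point where the hypothesis $r\mid q-1$ is used in an essential way, and it is what permits one to sidestep a prime-by-prime analysis (which would otherwise require a separate lifting-the-exponent treatment at $p=2$) in favour of a one-line argument. Once this identity is in hand, the rest of the proof is routine bookkeeping and the three claims of the lemma follow in the order indicated above.
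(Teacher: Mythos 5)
Your proposal is correct and follows essentially the same route as the paper: both reduce $rn\mid q^{\ell k}-1$ to $r\mid \frac{q^\ell-1}{n}\cdot\frac{q^{\ell k}-1}{q^\ell-1}$ via the geometric-series factorisation and then use $q\equiv 1\pmod r$ to replace the geometric sum by $k$ modulo $r$, arriving at $k=r/\gcd((q^\ell-1)/n,\,r)$. The only difference is that you also write out the routine divisibility $\ell_i\mid\ord_{rn}(q)$, which the paper declares well known and omits.
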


\begin{proof}
Since $r$ is a divisor of $q-1$, $\gcd(q, r)=1$. Consequently, $\gcd(q, rn)=1$. 
 It is clear that $\ord_{n}(q)$ divides $\ord_{r n}(q)$. Suppose $\ord_{rn}(q)=s \ell$. Then $s$ is the smallest positive integer such that $q^{s\ell}\equiv 1\pmod {rn}$. Clearly, $q^{s\ell}\equiv 1\pmod{rn}$ if and only if
\begin{align*}
	\frac{q^\ell-1}{n}\frac{q^{s\ell}-1}{q^\ell-1}\equiv \frac{q^\ell-1}{n} s\equiv 0 \pmod r.
\end{align*}
Hence, $s=\frac{r}{\gcd(\frac{q^\ell-1}n, r)}$, i.e., $\ord_{rn}(q)=\frac{r}{\gcd(\frac{q^\ell-1}n, r)} \ell$. The desired second conclusion is well known and its proof is thus omitted here. 
\end{proof}

\subsection{Bounds of linear codes and Pless power moments}

We first recall the following two bounds on linear codes, which will be needed in the sequel.

\begin{lemma}\label{lem2}
{\rm (Sphere Packing Bound \cite{HP2003})} Let $\C$ be an $[n, k, d]$ code over $\gf(q)$, then 
$$\sum_{i=0}^{\lfloor \frac{d-1}2\rfloor} \binom{n}{i} (q-1)^i\leq q^{n-k},$$
where $\lfloor \cdot \rfloor$ is the floor function.
\end{lemma}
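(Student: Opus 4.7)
The plan is to prove this classical sphere-packing (Hamming) bound by the standard volume argument: I will cover $\gf(q)^n$ by Hamming balls of radius $t := \lfloor (d-1)/2 \rfloor$ centered at the codewords of $\C$, show these balls are pairwise disjoint, and then compare their total cardinality to $q^n$.

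First I would fix $t = \lfloor (d-1)/2 \rfloor$ and, for each $\bc \in \C$, consider the Hamming ball $B_t(\bc) := \{\bx \in \gf(q)^n : \hd(\bx,\bc) \leq t\}$. The key step is to verify that these $q^k$ balls are pairwise disjoint. This reduces to the triangle inequality for Hamming distance: if some $\bx$ lay in $B_t(\bc_1) \cap B_t(\bc_2)$ for distinct codewords $\bc_1, \bc_2 \in \C$, then
\[
\hd(\bc_1, \bc_2) \leq \hd(\bc_1,\bx) + \hd(\bx,\bc_2) \leq 2t \leq d-1 < d,
\]
contradicting the assumption that $\C$ has minimum distance $d$. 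The floor in the definition of $t$ is exactly what guarantees $2t \leq d-1$ regardless of the parity of $d$.

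Next I would compute $|B_t(\bc)|$ by a direct count. A vector at Hamming distance exactly $i$ from $\bc$ is specified by choosing the set of $i$ positions in which it disagrees with $\bc$ (in $\binom{n}{i}$ ways) and then, in each such position, picking one of the $q-1$ symbols distinct from the corresponding coordinate of $\bc$. Summing over $i = 0, 1, \ldots, t$ gives $|B_t(\bc)| = \sum_{i=0}^{t} \binom{n}{i}(q-1)^i$, which is independent of $\bc$.

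Combining disjointness with this volume count, and using $|\C| = q^k$, yields
\[
q^k \sum_{i=0}^{\lfloor (d-1)/2 \rfloor} \binom{n}{i}(q-1)^i \;=\; \sum_{\bc \in \C} |B_t(\bc)| \;\leq\; |\gf(q)^n| \;=\; q^n,
\]
and dividing through by $q^k$ produces the desired inequality. There is no genuine obstacle here: the argument is entirely textbook, and its whole content is the triangle inequality together with the elementary volume computation above.
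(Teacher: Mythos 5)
Your proof is correct and is precisely the standard sphere-packing argument; the paper itself gives no proof of this lemma, merely citing Huffman and Pless \cite{HP2003}, and your volume-counting argument via disjoint Hamming balls of radius $\lfloor (d-1)/2 \rfloor$ is exactly the textbook proof that citation refers to. Nothing further is needed.
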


The following lemma is the sphere packing bound for even minimum distances.

\begin{lemma}{\rm \cite{FWF2017}}\label{lem3}
Let $\C$ be an $[n,k,d]$ code over $\gf(q)$, where $d$ is an even integer.	Then
$$\sum_{i=0}^{\frac{d-2}2} \binom{n-1}{i} (q-1)^i\leq q^{n-1-k}.$$
\end{lemma}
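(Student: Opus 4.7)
The plan is to reduce Lemma \ref{lem3} to the standard sphere packing bound (Lemma \ref{lem2}) by puncturing $\C$ at a single coordinate. The motivation is that the target bound has $\binom{n-1}{i}$ and $q^{n-1-k}$, which are exactly what the ordinary sphere packing bound yields for a code of length $n-1$ and dimension $k$. So I want to produce such a companion code from $\C$.

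Concretely, first I would fix an arbitrary coordinate $j \in \{1,\dots,n\}$ and let $\C^*$ be the code obtained from $\C$ by deleting the $j$-th coordinate from every codeword. Because $d$ is even we have $d \ge 2$, so $\C$ contains no codeword of weight $1$; in particular no nonzero codeword of $\C$ is supported only on coordinate $j$. This implies that the puncturing map $\C \to \C^*$ is injective, hence $\dim \C^* = k$. The length of $\C^*$ is $n-1$, and its minimum distance $d^*$ satisfies $d^* \ge d-1$, since deleting a single coordinate reduces the weight of any codeword by at most one. Thus $\C^*$ is an $[n-1,\, k,\, d^*]$ code with $d^* \ge d-1$.

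Next I would apply Lemma \ref{lem2} to $\C^*$, obtaining
\[
\sum_{i=0}^{\lfloor (d^*-1)/2 \rfloor} \binom{n-1}{i}(q-1)^i \;\le\; q^{(n-1)-k}.
\]
Since $d$ is even, $(d-2)/2$ is a nonnegative integer and $d-1$ is odd, so $\lfloor (d-1-1)/2 \rfloor = (d-2)/2$. From $d^* \ge d-1$ we get $\lfloor (d^*-1)/2 \rfloor \ge (d-2)/2$, and the sum on the left of the displayed inequality dominates $\sum_{i=0}^{(d-2)/2} \binom{n-1}{i}(q-1)^i$ term by term (all summands are nonnegative). Therefore
\[
\sum_{i=0}^{(d-2)/2} \binom{n-1}{i}(q-1)^i \;\le\; q^{n-1-k},
\]
which is the desired inequality.

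I do not anticipate a genuine obstacle here; the only point requiring a touch of care is verifying that puncturing preserves the dimension, and this is exactly where the even-distance hypothesis is used indirectly (through $d \ge 2$, which rules out weight-$1$ codewords and hence rules out a dimension drop under puncturing). Everything else is a direct invocation of Lemma \ref{lem2} applied to the punctured code $\C^*$.
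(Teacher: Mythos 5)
Your proof is correct. The paper does not actually prove this lemma --- it is quoted from \cite{FWF2017} with no argument supplied --- so there is no internal proof to compare against; but your puncturing argument is the standard derivation of the even-distance sphere-packing bound and it is complete. The key points all check out: $d\ge 2$ rules out nonzero codewords supported on a single coordinate, so puncturing is injective and the dimension is preserved; the punctured code is an $[n-1,k,d^*]$ code with $d^*\ge d-1$; and since $d$ is even, $\lfloor (d^*-1)/2\rfloor \ge \lfloor (d-2)/2\rfloor = (d-2)/2$, so Lemma \ref{lem2} applied to the punctured code dominates the desired sum term by term. Nothing further is needed.
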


An $[n, k, d]$ code over $\gf(q)$ is said to be distance-optimal if there is no $[n, k, d']$ code over $\gf(q)$ with $d'>d$. An $[n, k, d]$ code over $\gf(q)$ is said to be dimension-optimal if there is no $[n, k', d]$ code over $\gf(q)$ with $k'> k$. An $[n, k, d]$ code over $\gf(q)$ is said to be length-optimal if there is no $[n', k, d]$ code over $\gf(q)$ with $n'< n$. A linear code is said to be optimal if it is distance-optimal, or dimension-optimal or length-optimal. 
 
For an $[n, k, d]$ code $\C$ over $\gf(q)$ with weight distribution $(1, A_1,\cdots,A_n)$, we denote by $(1,A_1^{\bot},\cdots,A_n^{\bot})$ the weight distribution of its dual code. The first four Pless power moments on the two weight distributions are given as follows:
\begin{align*}
\sum_{i=0}^{n}A_i&=q^k,\\
\sum_{i=0}^{n} i A_i&=q^{k-1}[(q-1)n-A_1^{\bot}],\\	
\sum_{i=0}^n i^2 A_i&=q^{k-2}\{(q-1)^2n^2+(q-1)n-[2(q-1)n-q+2]A_1^{\bot}+2A_2^{\bot}\},\\
\sum_{i=0}^{n}i^3A_i&=q^{k-3}\{(q-1)n[(q-1)^2n^2+3(q-1)n-q+2]\\
&-[3(q-1)^2n^2-3(q-3)(q-1)n+q^2-6q+6]A_1^{\bot}\\
&+6[(q-1)n-q+2]A_2^{\bot}-6A_3^{\bot}\}.
\end{align*}

\section{An irreducible constacyclic code and its associated codes}\label{sec3}

Throughout this section, let $p$ be a prime and $q=p^s$ for a positive integer $s$. Let $n$ be a positive integer with $\gcd(n, q)=1$ and $\ell=\ord_{n}(q)$. Let $r$ be a positive divisor of $q-1$ and $\kappa=\frac{r}{\gcd(\frac{q^\ell-1}n, r)}$. Then by Lemma \ref{lem1}, $\ord_{rn}(q)=\kappa \ell$. Let $\alpha$ be a primitive element of $\gf(q^{\kappa \ell})$ and set $\beta=\alpha^{(q^{\kappa \ell}-1)/rn}$ and $\lambda=\alpha^{(q^{\kappa \ell}-1)/r}$. Then $\lambda \in \gf(q)^*$ with $\ord(\lambda)=r$ and $\beta \in \gf(q^{\kappa \ell})$ is a primitive $rn$-th root of unity such that $\beta^n=\lambda$. It should be noted that the $\lambda$ defined above ranges over all the elements of order $r$ in $\gf(q)$ when $\alpha$ ranges over all the primitive elements of $\gf(q^{\kappa \ell})$. 

Let $\C$ be the irreducible $\lambda$-constacyclic code of length $n$ over $\gf(q)$ with check polynomial $\m_{\beta}(x)$, which is the minimal polynomial over $\gf(q)$ of $\beta$. Let $\theta=\beta^{-1}$, then $\theta$ is a primitive $rn$-th root of unity. According to \cite{DY10}, \cite[Theorem 1]{SZW20} and \cite{SR2018}, we have
 \begin{equation}\label{eq1}
  	\C=\left\{ \bc(a)=\left(\tr_{ q^{\kappa \ell}/q}\left(a \theta^i\right)\right)_{i=0}^{n-1}: a \in \gf(q^{\kappa \ell}) \right\}, 
  \end{equation}
 where $\tr_{q^{\kappa \ell}/q}$ is the trace function from $\gf(q^{\kappa \ell})$ onto $\gf(q)$. The dual code $\C^{\bot}$ is the $\lambda^{-1}$-constacyclic code of length $n$ over $\gf(q)$ with generator polynomial $\m_{\beta^{-1}}(x)$.

Note that $\frac{r}{\kappa}$ divides $\frac{q^\ell-1}{n}$. We have  
$$
\left( \beta^\kappa \right)^{q^\ell-1} = \alpha^{(q^{\kappa \ell}-1) \frac{\kappa (q^\ell-1)}{rn}}=1.   
$$ 
Consequently, $\theta^\kappa \in \gf(q^\ell)$. Associated with the irreducible $\lambda$-constacyclic code $\C$ in Eq. (\ref{eq1}) are the following two codes over $\gf(q)$: 
\begin{eqnarray}\label{eqn-ding1}
 \Exp_1(\C)=\left\{ \bc_1(a)=\left(\tr_{ q^{\ell}/q}\left(a \theta^{\kappa i}\right) \right)_{i=0}^{\frac{rn}{\kappa}-1}: a\in \gf(q^\ell)\right\},
\end{eqnarray} 
and 
  $$\Exp_2(\C)=\left\{ \bc_2(a)=\left(\tr_{ q^{\ell}/q}\left(a \theta^{\kappa i}\right) \right)_{i=0}^{\frac{n}{\kappa}-1}: a\in \gf(q^\ell)\right\}.$$ 
  In the special case $\kappa=1$, $\C$ and $\Exp_2(\C)$ are identical.

   The following two conclusions follow from the trace representation of constacyclic codes (see \cite[Theorem 1]{SZW20} and \cite{SR2018}): 
 \begin{enumerate}	
 \item $\Exp_1(\C)$ is the irreducible cyclic code of length $\frac{rn}{\kappa}=\gcd(q^\ell-1, rn)$ over $\gf(q)$ with check polynomial $\m_{\beta^\kappa}(x)$, which is the minimal polynomial over $\gf(q)$ of $\beta^\kappa$.   
 \item $\Exp_2(\C)$ is the irreducible $\lambda$-constacyclic code of length $\frac{n}{\kappa}=\gcd(\frac{q^\ell-1}r,n)$ over $\gf(q)$ with check polynomial $\m_{\beta^\kappa}(x)$. 
 \end{enumerate}
  
The code $\C$ and the two codes $\Exp_1(\C)$  and  $\Exp_2(\C)$  have the following relationships.
  
  \begin{theorem}\label{thm4}
   Let notation be the same as before.
   \begin{enumerate}
   \item The cyclic code $\Exp_1(\C)=\{(\bc\parallel\lambda^{-1}\bc \parallel\cdots\parallel\lambda^{-(r-1)}\bc): \bc\in \Exp_2(\C)\}$, where $\parallel$ denotes the concatenation of vectors. 
   	\item The code $\C$ is permutation-equivalent to 
   $\underbrace{\Exp_2(\C) \oplus \cdots \oplus \Exp_2(\C)}_{\kappa ~{\rm times}}$, where $\C_1\oplus \C_2$ denotes the outer direct sum of $\C_1$ and $\C_2$, i.e., $\C_1\oplus \C_2=\{(\bc_1||\bc_2): \bc_1 \in \C_1, \bc_2 \in \C_2\}$.   
   \item The dual code $\C^{\bot}$ is permutation-equivalent to 
   $\underbrace{\Exp_2(\C)^{\bot} \oplus \cdots \oplus \Exp_2(\C)^{\bot}}_{\kappa ~{\rm times}}$.
   \end{enumerate}
   \end{theorem}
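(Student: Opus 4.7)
The strategy is to prove the three parts sequentially, with Part 3 following essentially for free from Part 2 by duality. Three observations, all either immediate from the setup or established above, will do the heavy lifting: (i) $\theta^{n}=\beta^{-n}=\lambda^{-1}\in\gf(q)$, so $\theta^{jn}=\lambda^{-j}$ is a $\gf(q)$-scalar; (ii) $\theta^{\kappa}\in \gf(q^{\ell})$, as shown just before the theorem statement; and (iii) the trace tower $\tr_{q^{\kappa\ell}/q}=\tr_{q^{\ell}/q}\circ \tr_{q^{\kappa\ell}/q^{\ell}}$.

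For Part 1, I would split each codeword $\bc_{1}(a)\in \Exp_{1}(\C)$ into $r$ consecutive blocks of length $n/\kappa$, indexed by $j=0,1,\dots,r-1$. The $i$-th entry of block $j$ is $\tr_{q^{\ell}/q}(a\theta^{\kappa(j(n/\kappa)+i)})=\tr_{q^{\ell}/q}(a\theta^{jn}\theta^{\kappa i})$, and by (i) this equals $\lambda^{-j}\tr_{q^{\ell}/q}(a\theta^{\kappa i})$. Since the $0$-th block is exactly $\bc_{2}(a)\in \Exp_{2}(\C)$, one gets $\bc_{1}(a)=(\bc_{2}(a)\parallel \lambda^{-1}\bc_{2}(a)\parallel\cdots\parallel \lambda^{-(r-1)}\bc_{2}(a))$. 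Letting $a$ range over $\gf(q^{\ell})$ yields the claimed equality.

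For Part 2, first observe that $r\mid q-1\mid q^{\ell}-1$, which forces $\kappa=r/\gcd((q^{\ell}-1)/n,r)$ to divide $n$; hence every $i\in\{0,\dots,n-1\}$ has a unique decomposition $i=\kappa a+b$ with $0\le a<n/\kappa$ and $0\le b<\kappa$. Take the permutation $\sigma(\kappa a+b)=b(n/\kappa)+a$, which groups the coordinates into $\kappa$ blocks according to the residue $b=i\bmod \kappa$. Using (ii) and (iii), the entry of $\bc(a)$ at index $i=\kappa a+b$ becomes $\tr_{q^{\kappa\ell}/q}(a\theta^{b}\cdot \theta^{\kappa a})=\tr_{q^{\ell}/q}(a_{b}\cdot \theta^{\kappa a})$, where $a_{b}:=\tr_{q^{\kappa\ell}/q^{\ell}}(a\theta^{b})\in \gf(q^{\ell})$. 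So the $b$-th block of the permuted codeword is precisely $\bc_{2}(a_{b})\in \Exp_{2}(\C)$, and $\sigma(\C)\subseteq \Exp_{2}(\C)^{\oplus \kappa}$. Equality follows by counting $\gf(q)$-dimensions: $\dim \C=\deg \m_{\beta}(x)=\kappa\ell$, while $\dim \Exp_{2}(\C)=\ell$ (because $\beta^{\kappa}$ has order $rn/\kappa$ and one checks $\ord_{rn/\kappa}(q)=\ell$, so $\beta^{\kappa}$ lies in $\gf(q^{\ell})$ but in no proper subfield), giving $\dim \Exp_{2}(\C)^{\oplus \kappa}=\kappa\ell$ too.

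Part 3 is then immediate: the coordinate permutation $\sigma$ is an orthogonal transformation under the standard bilinear form on $\gf(q)^{n}$, so $\sigma(\C^{\perp})=\sigma(\C)^{\perp}$. Combining this with the elementary identity $(\C_{1}\oplus \C_{2})^{\perp}=\C_{1}^{\perp}\oplus \C_{2}^{\perp}$, iterated $\kappa-1$ times, Part 2 yields the claim. The only mildly delicate steps in the whole argument are the divisibility $\kappa\mid n$ and the dimension formula $\dim \Exp_{2}(\C)=\ell$; both are short once the divisibility $r\mid q^{\ell}-1$ and the order computation for $\beta^{\kappa}$ modulo $rn/\kappa$ are in hand.
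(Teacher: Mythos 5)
Your proposal is correct and follows essentially the same route as the paper: the block decomposition of $\bc_1(a)$ via $\theta^{jn}=\lambda^{-j}$ for Part 1, the index decomposition $i=\kappa a'+b$ combined with trace transitivity and a dimension count ($\kappa\ell$ on both sides) for Part 2, and duality plus $(\C_1\oplus\C_2)^\perp=\C_1^\perp\oplus\C_2^\perp$ for Part 3. Your added remarks on why $\kappa\mid n$ and why $\dim\Exp_2(\C)=\ell$ are correct and, if anything, slightly more explicit than the paper's own argument.
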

  
  \begin{proof} 1. Note that $\theta^{n}=\lambda^{-1}$. For each $ \bc_1(a) \in \Exp_1(\C)$, by definition we have 
   \begin{eqnarray*}
   \bc_1(a) &=& \left(\tr_{ q^{\ell}/q}\left(a \theta^{\kappa i}\right) \right)_{i=0}^{\frac{rn}{\kappa}-1} \\ 
   &=& (\bc_2(a)\parallel  \lambda^{-1}\bc_2(a)\parallel\cdots \parallel \lambda^{-(r-1)}\bc_2(a))
   \end{eqnarray*} 
 for any $a \in \gf(q^\ell)$, where 
   $$ 
 \bc_2(a)=\left(\tr_{ q^{\ell}/q}\left(a \theta^{\kappa i}\right) \right)_{i=0}^{\frac{n}{\kappa}-1}.    
   $$ 
  It follows that 
 $$\Exp_1(\C)=\{(\bc\parallel\lambda^{-1}\bc \parallel\cdots\parallel\lambda^{-(r-1)}\bc): \bc\in \Exp_2(\C)\}.$$  
   
2. When $\kappa=1$, $\C=\Exp_2(\C)$, the desired conclusion follows. When $\kappa>1$, 
$$ \tr_{q^\ell/q}(\tr_{q^{\kappa \ell}/q^\ell}(a \theta^{\kappa i+j }))= \tr_{q^\ell/q}(\theta^{\kappa i } \tr_{q^{\kappa \ell}/q^\ell}(a \theta^j)),$$
for $0\leq i\leq \frac{n}{\kappa}-1$ and $0\leq j\leq \kappa-1$. It is clear that $\C$ is permutation-equivalent to the code
\begin{align*}
	\mathcal{D}=\left\{\left( \bc_2(\tr_{q^{\kappa \ell}/q^{\ell}}(a)) \parallel \bc_2(\tr_{q^{\kappa \ell}/q^{\ell}}(a \theta)) \parallel \cdots \parallel \bc_2(\tr_{q^{\kappa \ell}/q^{\ell}}(a\theta^{\kappa-1})) \right) : a \in \gf(q^{\kappa \ell})   \right\}.
\end{align*}
It is easily seen that $\mathcal{D}\subseteq \underbrace{\Exp_2(\C) \oplus \cdots \oplus \Exp_2(\C)}_{\kappa ~{\rm times}}$. Note that the size of $C_1^{(q,n)}$ is equal to $\ell$, then $$\dim(\underbrace{\Exp_1(\C) \oplus \cdots \oplus \Exp_1(\C)}_{\kappa ~{\rm times}})=\kappa \ell=\dim(\C).$$
It follows that $\mathcal{D}=\underbrace{\Exp_2(\C) \oplus \cdots \oplus \Exp_2(\C)}_{\kappa ~{\rm times}}$. The desired result follows.

3. The third desired result follows directly from result 2.
  \end{proof}

By Theorem \ref{thm4}, we have the following results. 
       
\begin{theorem}\label{thm5}
Let notation be the same as before. The following assertions are equivalent.
\begin{enumerate} 
\item $\Exp_2(\C)$ is a $\left[\gcd(\frac{q^\ell-1}r, n), \ell, d\right]$ code over $\gf(q)$ with weight enumerator $W(z)$.	
\item $\Exp_1(\C)$ is a $\left[\gcd(q^\ell-1, rn), \ell, rd\right]$ code over $\gf(q)$ with weight enumerator $W(z^r)$.
\item $\C$ is an $[n, \kappa \ell, d]$ code over $\gf(q)$ with weight enumerator $W(z)^{\kappa}$.
\end{enumerate}
\end{theorem}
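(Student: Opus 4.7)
The plan is to bootstrap from Theorem \ref{thm4}, which already encodes all the structural relationships needed. Part (1) of that theorem gives a concrete bijection between $\Exp_2(\C)$ and $\Exp_1(\C)$, and part (2) exhibits $\C$ as permutation-equivalent to a $\kappa$-fold outer direct sum of $\Exp_2(\C)$. Since each of these relationships preserves length, dimension and minimum distance, and acts on the weight enumerator in a predictable way, I would establish $(1)\Leftrightarrow(2)$ and $(1)\Leftrightarrow(3)$ separately, which together yield all three equivalences.

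For $(1)\Leftrightarrow(2)$, assume $\Exp_2(\C)$ is an $[n/\kappa,\ell,d]$ code with weight enumerator $W(z)$. By Theorem \ref{thm4}(1) every codeword of $\Exp_1(\C)$ has the form $(\bc \parallel \lambda^{-1}\bc \parallel \cdots \parallel \lambda^{-(r-1)}\bc)$ with $\bc\in\Exp_2(\C)$. Each scalar $\lambda^{-i}\in\gf(q)^*$ is a unit, so the block $\lambda^{-i}\bc$ has the same support as $\bc$, and hence the Hamming weight of the concatenation equals $r\,\wt(\bc)$. This gives a weight-scaling bijection between the two codes, so $\Exp_1(\C)$ has dimension $\ell$, minimum distance $rd$ and weight enumerator $W(z^r)$. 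The length equality $rn/\kappa=\gcd(q^\ell-1,rn)$ follows from $\kappa=r/\gcd((q^\ell-1)/n,r)$ together with $\gcd(un,rn)=n\gcd(u,r)$ applied to $u=(q^\ell-1)/n$; the analogous manipulation gives $n/\kappa=\gcd((q^\ell-1)/r,n)$. The reverse implication is immediate because $W(z)$ can be recovered from $W(z^r)$ by the substitution $z^r \mapsto z$.

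For $(1)\Leftrightarrow(3)$, invoke Theorem \ref{thm4}(2): $\C$ is permutation-equivalent to $\underbrace{\Exp_2(\C)\oplus\cdots\oplus\Exp_2(\C)}_{\kappa\text{ times}}$. Permutation equivalence preserves all invariants, and for an outer direct sum of $\kappa$ identical $[n/\kappa,\ell,d]$ codes the length is $n$, the dimension is $\kappa\ell$, the minimum distance equals $d$ (attained by any codeword with exactly one nonzero block of minimum weight), and the weight enumerator factors as the product $W(z)^\kappa$. Conversely $W(z)$ is uniquely determined by $W(z)^\kappa$ in $\Z_{\geq 0}[z]$ (it is the unique monic $\kappa$-th root with constant term $1$), so the reverse implication also holds. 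The only genuinely nontrivial bookkeeping lies in the gcd identifications of the two claimed length formulas with $rn/\kappa$ and $n/\kappa$; once those elementary reductions are done, the proof is a direct read-off from Theorem \ref{thm4} and no new ideas are required.
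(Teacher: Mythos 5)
Your proposal is correct and follows exactly the route the paper takes: the paper derives Theorem \ref{thm5} directly from Theorem \ref{thm4} (it offers no further written argument beyond ``By Theorem \ref{thm4}, we have the following results''), and your fleshing out of the weight-scaling bijection from part (1), the direct-sum weight enumerator from part (2), and the elementary gcd identifications of the lengths is precisely the intended derivation. No gaps.
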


\begin{theorem}\label{thm6}
	Let notation be the same as before. Then $\C^{\bot}$ is an $[n, n-\kappa \ell, d^\perp]$ code over $\gf(q)$ with weight enumerator $W(z)^{\kappa}$ if and only if $\Exp_2(\C)^{\bot}$ is a $\left[\gcd(\frac{q^\ell-1}r, n), \gcd(\frac{q^\ell-1}r, n)-\ell,  d^\perp\right]$ code $\gf(q)$ with weight enumerator $W(z)$.
\end{theorem}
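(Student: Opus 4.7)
The plan is to deduce Theorem~\ref{thm6} as an immediate corollary of part~3 of Theorem~\ref{thm4}, which asserts that $\C^{\bot}$ is permutation-equivalent to the $\kappa$-fold outer direct sum
\[
\underbrace{\Exp_2(\C)^{\bot} \oplus \cdots \oplus \Exp_2(\C)^{\bot}}_{\kappa \text{ times}}.
\]
Since permutation-equivalence preserves the length, dimension, minimum distance, and the full weight enumerator, the proof reduces to understanding how these invariants behave under the outer direct sum of $\kappa$ identical copies of a code.

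To that end, I would first record the elementary fact that if a code $\mathcal{E}$ over $\gf(q)$ has parameters $[N,K,D]$ and weight enumerator $W(z)$, then $\mathcal{E}^{\oplus \kappa}$ has length $\kappa N$, dimension $\kappa K$, minimum distance $D$ (any nonzero codeword contains at least one nonzero block of weight $\geq D$, and a codeword supported in a single block attains this), and weight enumerator $W(z)^{\kappa}$ (weights add under concatenation, so the generating function factors). Combining this with the identity $\gcd((q^{\ell}-1)/r,\,n) = n/\kappa$ recorded just before the trace representations of $\Exp_1(\C)$ and $\Exp_2(\C)$ in Section~\ref{sec3}, the outer direct sum in question has length $n$, dimension $n-\kappa\ell$, minimum distance equal to that of $\Exp_2(\C)^{\bot}$, and weight enumerator equal to the $\kappa$-th power of that of $\Exp_2(\C)^{\bot}$.

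The forward direction is then immediate: the computed parameters and weight enumerator of the direct sum transfer through the permutation-equivalence to $\C^{\bot}$. For the converse, given the stated parameters and weight enumerator $W(z)^{\kappa}$ of $\C^{\bot}$, the permutation-equivalence lets us regard $\C^{\bot}$ as the direct sum; the length $n$ forces each component to have length $n/\kappa$, the dimension $n-\kappa\ell$ forces each component dimension to be $n/\kappa - \ell$, and the component weight enumerator $U(z)$ must satisfy $U(z)^{\kappa} = W(z)^{\kappa}$ in $\mathbb{Q}[z]$. Since both $U$ and $W$ have constant term $1$ and $\mathbb{Q}[z]$ is a UFD, this forces $U(z) = W(z)$, which in particular pins down the component minimum distance to $d^{\bot}$. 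No step appears to present a genuine obstacle; Theorem~\ref{thm6} is essentially a bookkeeping consequence of the direct-sum decomposition in Theorem~\ref{thm4}(3).
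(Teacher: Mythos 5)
Your proposal is correct and follows essentially the same route as the paper: the authors state Theorem~\ref{thm6} (like Theorem~\ref{thm5}) as an immediate consequence of the permutation-equivalence in Theorem~\ref{thm4}, part~3, exactly as you do. Your additional details --- that weight enumerators multiply under outer direct sums, that $\gcd(\frac{q^\ell-1}{r},n)=n/\kappa$, and that $U(z)^{\kappa}=W(z)^{\kappa}$ with $U(0)=W(0)=1$ forces $U=W$ for the converse --- simply make explicit the bookkeeping the paper leaves to the reader.
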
 

The following example demonstrates the results of Theorems \ref{thm5} and \ref{thm6}. 

\begin{example} 
Let $q=4$, $n=15$ and $r=3$. Then $\ell:=\ord_n(q)=2$ , $\kappa=r=3$, and $\ord_{rn}(q)=6$. 
Let $\alpha$ be a primitive element of $\gf(q^{\kappa \ell})$ with $\alpha^{12} + \alpha^7 + \alpha^6 + \alpha^5 + \alpha^3 + \alpha + 1=0$. Then $\beta=\alpha^{(4^6-1)/3\times 15}$ and $\lambda=\alpha^{(4^6-1)/3}$ is a primitive element of $\gf(q)$. 
Then we have the following. 
\begin{itemize}
\item $\Exp_2(\C)$ is a $\left[5, 2, 4\right]$ code over $\gf(4)$ with weight enumerator $W(z)=1+15z^4$, 
and $\Exp_2(\C)^\perp$ is a $\left[5, 3, 3\right]$ code over $\gf(4)$ with weight enumerator $W^\perp(z)=1+30z^3+15z^4+18z^5$. 
\item $\C$ is a $[15, 6, 4]$ code over $\gf(4)$ with weight enumerator $W(z)^{3}$, 
and $\C^\perp$ is a $[15, 9, 3]$ code over $\gf(4)$ with weight enumerator $(W^\perp(z))^{3}$. 
\item $\Exp_1(\C)$ is a $\left[15, 2, 12\right]$ code over $\gf(4)$ with weight enumerator $W(z^3)=1+15z^{12}$.
\end{itemize}
\end{example} 

The following theorem gives an estimate of the dual distance of the irreducible constacyclic code in Eq. (\ref{eq1}).

 \begin{theorem}\label{thm7}
 Let $\C$ be the irreducible $\lambda$-constacyclic code in Eq. (\ref{eq1}). Then $\C^{\bot}$ is a $\lambda^{-1}$-constacyclic code over $\gf(q)$ with parameters $\left[n, n- \frac{r}{\gcd(\frac{q^\ell-1}n,r)}\ell, d^{\bot} \geq 2\right]$, where $r=\ord(\lambda)$ 
 and $\ell=\ord_n(q)$. Furthermore, if $\ell \geq 2$, then the following hold:  
 \begin{enumerate}
 	\item $d^{\bot}= 2$ if and only if $\gcd(\frac{q-1}r,n)>1$.
 	\item If $n\mid (\frac{q^\ell-1}{q-1})$, $n> 2(\frac{q^{\ell/2}-1}{q-1})$, and $\gcd(\frac{q-1}r,n)=1$, then $3\leq d^{\bot}\leq 4$. 
 	\item If $n\mid (\frac{q^\ell-1}{q-1})$, $n>\frac{q^{\ell-1}-1}{q-1}+2$, and $\gcd(\frac{q-1}r,n)=1$, then  $\C^{\bot}$ is an $[n, n-\ell, 3]$ code over $\gf(q)$, which is both distance-optimal and dimension-optimal. 
 \end{enumerate}
 \end{theorem}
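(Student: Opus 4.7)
The plan is to establish the length, dimension and baseline bound $d^{\bot}\geq 2$ first, then treat the three numbered parts in order, relying on the explicit description of $\C^{\bot}$, Lemmas~\ref{lem1}--\ref{lem3}, and a short direct analysis of weight-two codewords. The length $n$ and the fact that $\C^{\bot}$ is $\lambda^{-1}$-constacyclic are immediate from the definition of the dual. For the dimension: $\C^{\bot}$ is generated by the reciprocal of the check polynomial $\m_{\beta}(x)$, and $\deg\m_{\beta}=|C_1^{(q,rn)}|=\kappa\ell$ by Lemma~\ref{lem1}. A putative weight-one vector with $c_i=a\in\gf(q)^{*}$ would force $a\theta^{ki}=0$ for all $k\in C_1^{(q,rn)}$, which is impossible, so $d^{\bot}\geq 2$.

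For part~1 I would analyse weight-two vectors directly. A word with nonzero entries $a,b\in \gf(q)^{*}$ at positions $i\neq j$ lies in $\C^{\bot}$ iff $a\theta^{ki}+b\theta^{kj}=0$ for every $k\in C_1^{(q,rn)}$. Because $C_1^{(q,rn)}=\{q^s\bmod rn\}$ and $(q-1)\mid (q^s-1)$, the entire system collapses to the single condition $\theta^{(q-1)(i-j)}=1$, i.e.\ $rn\mid (q-1)(i-j)$. Using $r\mid q-1$, this simplifies to $\tfrac{n}{\gcd(n,(q-1)/r)}\mid (i-j)$; a nonzero such $i-j\bmod n$ exists iff $\gcd(n,(q-1)/r)>1$, and the corresponding coefficient $b=-a\theta^{i-j}$ automatically lies in $\gf(q)^{*}$.

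For part~2 the hypothesis $n\mid (q^\ell-1)/(q-1)$ forces $r\mid (q^\ell-1)/n$, so $\kappa=1$ and $\dim\C^{\bot}=n-\ell$. Part~1 gives $d^{\bot}\geq 3$. To get $d^{\bot}\leq 4$, I would suppose $d^{\bot}\geq 5$ and apply Lemma~\ref{lem2} to derive
\[
1+n(q-1)+\binom{n}{2}(q-1)^{2}\leq q^{\ell}.
\]
Evaluating the left side at the threshold $n_{0}=2(q^{\ell/2}-1)/(q-1)$ gives $2q^{\ell}-q^{\ell/2+1}-q^{\ell/2}+q\geq q^{\ell}$ for all $q\geq 2, \ell\geq 2$ (with equality only in the boundary case $\ell=2$, where strictness is recovered by plugging in the smallest admissible integer $n\geq 3$). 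Since the left side is strictly increasing in $n$ and $n>n_{0}$, the inequality is violated. This uniform numerical check --- covering both the boundary $\ell=2$ and generic $\ell\geq 3$ --- is the one step I expect to require the most care.

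Part~3 reduces to part~2 plus a single application of Lemma~\ref{lem3}. A short verification gives $(q^{\ell-1}-1)/(q-1)+2\geq 2(q^{\ell/2}-1)/(q-1)$ for all $q\geq 2, \ell\geq 2$, so the hypothesis implies that of part~2 and hence $d^{\bot}\leq 4$. To rule out $d^{\bot}=4$, Lemma~\ref{lem3} applied with $d=4$ yields $n\leq (q^{\ell-1}-1)/(q-1)+1$, contradicting the hypothesis; hence $d^{\bot}=3$. Distance-optimality follows by applying the same two bounds (Lemma~\ref{lem3} for even $d'$ and Lemma~\ref{lem2} for odd $d'\geq 5$) to any competitor $[n,n-\ell,d']$ code with $d'\geq 4$; dimension-optimality follows from the sphere-packing bound $1+n(q-1)\leq q^{n-k'}$ applied to any $[n,k',3]$ code, which forces $k'\leq n-\ell$.
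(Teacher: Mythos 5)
Your proposal is correct and follows essentially the same route as the paper: the dimension comes from $\deg\m_{\beta}=\kappa\ell$, weight-two dual codewords are characterized by $\theta^{i-j}\in\gf(q)^{*}$ (equivalently $rn\mid(q-1)(i-j)$), and parts 2 and 3 are settled by the sphere-packing bounds of Lemmas~\ref{lem2} and~\ref{lem3}. You merely make explicit two computations the paper leaves implicit --- the verification that $1+n(q-1)+\binom{n}{2}(q-1)^{2}>q^{\ell}$ at the threshold $n>2(q^{\ell/2}-1)/(q-1)$, and the fact that the hypothesis of part 3 implies that of part 2 --- both of which check out.
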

 
 \begin{proof}
 Note that $\C$ is an $[n, \kappa \ell]$ $\lambda$-constacyclic code. It is well known that $\C^{\bot}$ is an $[n, n-\kappa \ell]$ $\lambda^{-1}$-constacyclic code. It is clear that $d^{\bot}\geq 2$. Obviously, $d^{\bot}=2$ if and only if there is an integer $1\leq i \leq n-1$ such that $\theta^i \in \gf(q)^*$, i.e., $(q-1)i\equiv 0 \pmod {rn}$ has a solution $1\leq i \leq n-1$, which is equivalent to $\gcd(\frac{q-1}r, n)>1$.
 
Assume now that $n \mid (\frac{q^\ell-1}{q-1})$ and $\gcd(\frac{q-1}r,n)=1$. Then $\kappa=1$ and $\C^{\bot}$ is an $[n,n-\ell, d^\perp]$ code over $\gf(q)$ with $d^\perp \geq 3$. 
 If $n>2(\frac{q^{\ell/2}-1}{q-1})$, by the Sphere Packing bound, the minimum distance of an $[n, n-\ell]$ code over $\gf(q)$ is at most $4$. Hence, $d^{\bot}\leq 4$. 
If  $n> \frac{q^{\ell-1}-1}{q-1}+2$, there is no $[n, n-\ell. 4]$ code over $\gf(q)$ by the bound in Lemma \ref{lem3}, which means that $\C^{\bot}$ is distance-optimal, and there is no $[n, n-\ell+1,3]$ code over $\gf(q)$ by the Sphere Packing bound, which means that $\C^{\bot}$ is dimension-optimal. 
 \end{proof}
 
 The following is a corollary of Theorem \ref{thm7}.  
  
 \begin{corollary}\label{cormarch231}
 	Let $n=\frac{q^m-1}{(q-1)e}$, where $m$ is a positive integer, $e$ is a positive divisor of $\frac{q^m-1}{q-1}$ and $e\leq q-1$. 
 	Let $\gcd(\frac{q-1}r,n)=1$ and let $\C$ be the irreducible $\lambda$-constacyclic code in Eq. (\ref{eq1}). Then $\C^{\bot}$ is an $[n, n-m, 3]$ code over $\gf(q)$ and is both distance-optimal and dimension-optimal.
 \end{corollary}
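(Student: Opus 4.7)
The plan is to derive the corollary as a direct application of part 3 of Theorem \ref{thm7}, taking $\ell = m$. That part requires three hypotheses: $n \mid (q^m-1)/(q-1)$, $n > (q^{m-1}-1)/(q-1) + 2$, and $\gcd((q-1)/r, n) = 1$. The last is assumed. The first is immediate from the defining equation $n \cdot e = (q^m-1)/(q-1)$. Note moreover that $n \mid (q^m-1)/(q-1)$ forces $r$ to divide $(q^m-1)/n = (q-1)e$, so in the notation of Lemma \ref{lem1} we get $\kappa = r/\gcd((q^m-1)/n, r) = 1$, and the dual dimension in Theorem \ref{thm7} specializes to $n - m$, matching the corollary.

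For the inequality $n > (q^{m-1}-1)/(q-1) + 2$, the hypothesis $e \leq q-1$ gives $n \geq (q^m-1)/(q-1)^2 = (q^{m-1}+q^{m-2}+\cdots+1)/(q-1)$. Comparing this with the right-hand side, rewritten as $(q^{m-1}+2q-3)/(q-1)$, reduces the claim to the elementary inequality $q^{m-2}+q^{m-3}+\cdots+1 > 2q-3$, which holds in the intended parameter range by a brief case inspection.

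The main technical step, and the chief obstacle, is verifying $\ord_n(q) = m$ so that $\ell = m$ is the correct value to feed into the theorem. From $n(q-1)e = q^m - 1$ we have $\ord_n(q) \mid m$; to rule out a proper divisor $\ell$ of $m$, note that $\ell \leq m/2$ and that $\ord_n(q) = \ell$ would force $n \mid q^\ell - 1$, hence $n \leq q^{\lfloor m/2 \rfloor} - 1$, contradicting the lower bound $n \geq (q^m-1)/(q-1)^2$ for all but a handful of small $(q,m)$. The boundary cases, in particular $m = 2$ where $q \equiv -1 \pmod n$ forces $\ord_n(q) \in \{1,2\}$ and further restricts $n$, require a short direct check. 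Once these three conditions are confirmed, Theorem \ref{thm7} part 3 immediately yields that $\C^{\bot}$ is an $[n, n-m, 3]$ code over $\gf(q)$ and is both distance-optimal and dimension-optimal.
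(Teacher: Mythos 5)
Your overall route---checking the three hypotheses of part 3 of Theorem \ref{thm7} with $\ell=m$ and then quoting that result---is exactly the derivation the paper intends (the paper offers no argument beyond calling the statement a corollary of Theorem \ref{thm7}), and your verifications of $n\mid \frac{q^m-1}{q-1}$, of $\kappa=1$, and of $\ord_n(q)=m$ for $m\geq 3$ are sound. The genuine gap is in the inequality step. Your reduction of $n>\frac{q^{m-1}-1}{q-1}+2$ to $q^{m-2}+q^{m-3}+\cdots+1>2q-3$ is algebraically correct, but the reduced inequality does \emph{not} hold throughout the stated parameter range: for $m=2$ it reads $1>2q-3$, false for every $q\geq 2$, and for $m=3$ it reads $q+1>2q-3$, false for every $q\geq 4$. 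This is not merely an artifact of using the worst case $e=q-1$: there are admissible parameters for which the \emph{original} inequality fails, e.g. $q=5$, $m=2$, $e=2$ gives $n=3\not>3$, and $q=4$, $m=3$, $e=3$ gives $n=7\not>7$. In those instances Theorem \ref{thm7}(3) simply cannot be invoked, so your ``brief case inspection'' would in fact uncover counterexamples rather than confirm the claim. The conclusion of the corollary still holds there---for $m=2$ the code $[n,n-2,3]$ is MDS and both optimalities follow from the Singleton bound, and for $(q,m,e)=(4,3,3)$ one applies Lemma \ref{lem3} and the sphere-packing bound directly with $n=7$---but that is an additional argument your proof does not supply, so the cases $m\in\{2,3\}$ must be treated separately rather than absorbed into the general inequality.

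A second, smaller issue concerns the boundary case you defer: for $m=2$, $q$ odd and $e=(q+1)/2$ (admissible since $e\mid q+1$ and $e\leq q-1$ for $q\geq 3$) one gets $n=2$, where $\ord_n(q)=1\neq m$; the ``short direct check'' you promise does not merely restrict $n$ further but actually fails, and the corollary as literally stated is false there (it would assert a $[2,0,3]$ code). This degenerate case needs to be excluded explicitly, a point the paper itself also overlooks.
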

 
The following example demonstrates the dual code $\C^\perp$ in Corollary \ref{cormarch231}.  
 
 \begin{example}
   Let $m=4$, $q=5$, and $r=4$. Then we have the following examples of the code $\C^\perp$ in Corollary \ref{cormarch231}.
   \begin{enumerate}
   	\item When $e=1$ and $\alpha$ is the primitive element of $\gf(5^4)$ with $\alpha^4 + 4\alpha^2 + 4\alpha + 2=0$, 
   	we have $n=156$  and $\lambda=2$. The dual code of the irreducible $2$-constacyclic code $\C$ over $\gf(5)$ 
   	in Corollary \ref{cormarch231}  has  parameters $[156,152,3]$.
   	\item When $e=2$ and $\alpha$ is the primitive element of $\gf(5^4)$ with $\alpha^4 + 4\alpha^2 + 4\alpha + 2=0$, 
   	we have $n=78$  and $\lambda=2$. The dual code of the irreducible $2$-constacyclic code $\C$ over $\gf(5)$ 
   	in Corollary \ref{cormarch231}  has  parameters $[78,74,3]$.  
   	   	\item When $e=3$ and $\alpha$ is the primitive element of $\gf(5^4)$ with $\alpha^4 + 4\alpha^2 + 4\alpha + 2=0$, 
   	we have $n=52$  and $\lambda=2$. The dual code of the irreducible $2$-constacyclic code $\C$ over $\gf(5)$ 
   	in Corollary \ref{cormarch231}  has  parameters $[52,48,3]$. 
   	   	\item When $e=4$ and $\alpha$ is the primitive element of $\gf(5^4)$ with $\alpha^4 + 4\alpha^2 + 4\alpha + 2=0$, 
   	we have $n=39$  and $\lambda=2$. The dual code of the irreducible $2$-constacyclic code $\C$ over $\gf(5)$ in Corollary \ref{cormarch231} has  parameters $[39,35,3]$. 	
   \end{enumerate} 
All the codes $\C^\perp$ in this example are both dimension-optimal and distance-optimal. 
 \end{example}
 
We now introduce another irreducible cyclic code associated to the irreducible constacyclic code $\C$ in Eq. (\ref{eq1}). Note that $\beta^{r}$ is an $n$-th primitive root of unity in $\gf(q^{\kappa \ell})$, where $\ell=\ord_n(q)$ and 
$$
\kappa=\frac{r}{\gcd((q^\ell -1)/n, r)}. 
$$ 
Let $\Exp_3(\C)$ be the irreducible cyclic code of length $n$ over $\gf(q)$ with check polynomial $\m_{\beta^r}(x)$. 
It is easily seen that 
$$
\m_{\beta^r}(x)=\prod_{i=0}^{\ell -1} (x-\beta^{ri}). 
$$
Consequently, $\Exp_3(\C)$ is an $[n, \ell]$ cyclic code over $\gf(q)$, while $\C$ is an $[n, \kappa \ell]$ irreducible constacyclic code over $\gf(q)$. 
The four codes $\C$, $\Exp_1(\C)$, $\Exp_2(\C)$ and $\Exp_3(\C)$ are related by definition. In a special case, we have the following. 

\begin{theorem}\label{thm-bridge1} 
If $\gcd(r, n)=1$, then  $\C$ and $\Exp_3(\C)$ have length $n$ and dimension $\ell$ and are permutation-equivalent. 
\end{theorem}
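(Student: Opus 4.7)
My plan is to split the claim into two pieces: (a) a quick divisibility argument giving that both codes have length $n$ and dimension $\ell$, and (b) an explicit Bezout construction producing the coordinate rearrangement that identifies the two codes.

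For (a): since $r\mid q-1\mid q^{\ell}-1$ and $n\mid q^{\ell}-1$, the hypothesis $\gcd(r,n)=1$ forces $rn\mid q^{\ell}-1$, so $r\mid (q^{\ell}-1)/n$ and $\gcd((q^{\ell}-1)/n,r)=r$. Therefore $\kappa=1$ in Lemma \ref{lem1}, and $\ord_{rn}(q)=\ell$ is the degree of the check polynomial $\m_\beta(x)$ of $\C$, giving $\dim\C=\ell$. Since $\beta^r$ has order $n$ and $\ord_n(q)=\ell$, the minimal polynomial $\m_{\beta^r}(x)$ also has degree $\ell$, so $\Exp_3(\C)$ is likewise an $[n,\ell]$ code. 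This disposes of the length and dimension claims.

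For (b): I would use Bezout to pick $u,v\in\Z$ with $un+vr=1$; since $\beta^n=\lambda$, this lifts to the multiplicative identity
\[
\beta=\lambda^u(\beta^r)^v,\qquad \beta^{-i}=\lambda^{-ui}(\beta^r)^{-vi}\quad(i\in\Z).
\]
Substituting into the trace description \eqref{eq1} of $\C$ and pulling the constant $\lambda^{-ui}\in\gf(q)^*$ out of the $\gf(q)$-linear trace yields
\[
\tr_{q^{\ell}/q}(a\beta^{-i})=\lambda^{-ui}\,\tr_{q^{\ell}/q}\!\bigl(a(\beta^r)^{-vi}\bigr).
\]
From $vr\equiv 1\pmod n$ we get $\gcd(v,n)=1$, so the map $\sigma\colon\Z_n\to\Z_n$, $\sigma(i)=(-vi)\bmod n$, is a bijection, and $(\beta^r)^{-vi}=(\beta^r)^{\sigma(i)}$. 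Permuting the coordinates of any codeword of $\C$ by $\sigma^{-1}$ and scaling the resulting $j$-th coordinate by $\lambda^{u\sigma^{-1}(j)}$ then carries $\bigl(\tr_{q^\ell/q}(a\beta^{-i})\bigr)_{i=0}^{n-1}$ onto $\bigl(\tr_{q^\ell/q}(a(\beta^r)^{j})\bigr)_{j=0}^{n-1}$, which by the trace description of the irreducible cyclic code $\Exp_3(\C)$ is a generic codeword of $\Exp_3(\C)$; bijectivity of this map in $a\in\gf(q^\ell)$ is immediate since $\gf(q^\ell)$ has $q^\ell$ elements and both codes have dimension $\ell$.

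The main obstacle is accounting for the $\lambda^{-ui}$ factors that Bezout introduces. Because these scalars vary with the coordinate $i$, the transformation is a priori a monomial (scaled-permutation) one rather than a bare permutation; the pure coordinate permutation $\sigma$ alone suffices only when $\lambda=1$ (i.e.\ $r=1$). To conclude in the form stated, one has to absorb these $\gf(q)^*$ scalars into the notion of equivalence employed in the paper---the same convention that underlies the ``permutation-equivalent'' assertions of Theorem \ref{thm4}. Apart from this bookkeeping step, the Bezout recipe above supplies every piece needed to realise the identification of $\C$ with $\Exp_3(\C)$ explicitly.
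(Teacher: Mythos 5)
Your argument is essentially the paper's own: the paper likewise reduces both codes to their trace representations and applies the coordinate permutation $i\mapsto ir\bmod n$, which is (up to the harmless sign coming from writing $\Exp_3(\C)$ in terms of $(\beta^r)^{j}$ rather than $\theta^{rj}$) the inverse of your Bezout map $i\mapsto vi\bmod n$; your part (a) matches the paper's appeal to Theorem \ref{thm5} together with the fact that $\m_{\beta^r}(x)$ has degree $\ell$. The one place you go further is the scalar bookkeeping, and your concern there is legitimate rather than a defect of your write-up alone: since $\theta^{n}=\lambda^{-1}\neq 1$ when $r>1$, one has $\theta^{ir\bmod n}=\lambda^{\lfloor ir/n\rfloor}\theta^{ir}$, so the permuted codeword of $\C$ agrees with the corresponding codeword of $\Exp_3(\C)$ only after rescaling coordinate $i$ by $\lambda^{\lfloor ir/n\rfloor}\in\gf(q)^*$. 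The paper's one-line justification silently identifies $\theta^{ir}$ with $\theta^{ir\bmod n}$ and therefore, read strictly, also only exhibits a monomial (scaled-permutation) equivalence; neither argument produces a bare coordinate permutation for $r>1$. This costs nothing downstream, because monomial equivalence already preserves length, dimension and the weight distribution, which is all that is used in Theorems \ref{thm-mainfeb26}, \ref{thm-june161}, \ref{thm-mainfeb27} and \ref{thm-june162}; but if ``permutation-equivalent'' is meant literally, both your proof and the paper's would need an additional argument to absorb the diagonal scalars. In short: same route as the paper, with yours being the more careful account of the two.
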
 

\begin{proof} 
Let $\gcd(r, n)=1$. Then we have $\kappa=1$. By Theorem \ref{thm5}, 
$\C$  has length $n$ and dimension $\ell$. Let $\theta=\beta^{-1}$. 
By definition, the trace representation of $\Exp_3(\C)$ is given by 
\begin{eqnarray*} 
 \Exp_3(\C)=\left\{ \bc_3(a)=\left(\tr_{ q^{\ell}/q}\left(a \theta^{ri}\right) \right)_{i=0}^{n-1}: a\in \gf(q^\ell)\right\}. 
\end{eqnarray*} 
Since  $\gcd(r, n)=1$, $\kappa=1$ and the trace representation of $\C$ in Eq. (\ref{eq1}) becomes 
\begin{eqnarray*} 
 \C=\left\{ \bc(a)=\left(\tr_{ q^{\ell}/q}\left(a \theta^{i}\right) \right)_{i=0}^{n-1}: a\in \gf(q^\ell)\right\}.  
\end{eqnarray*} 
It is easily seen that the coordinate permutation 
$i \mapsto ir \bmod n$ sends $\C$ to $\Exp_3(\C)$.  
Therefore, $\C$ and $\Exp_3(\C)$ are permutation-equivalent. 
\end{proof} 

When $\gcd(r, n)=1$, the relations among the  weight enumerators and parameters of the four codes 
$\C$, $\Exp_1(\C)$, $\Exp_2(\C)$ and $\Exp_3(\C)$ are clearly known by combining Theorems \ref{thm5} and \ref{thm-bridge1}. 
It is very interesting to note that the irreducible $\lambda$-constacyclic code $\C$ in Eq. (\ref{eq1}) is permutation-equivalent 
to the irreducible cyclic code $\Exp_3(\C)$ under the special condition $\gcd(r, n)=1$. This is a special result only for 
this special code $\C$ in Eq. (\ref{eq1}) under this special condition. Theorem \ref{thm-bridge1} will be used to study 
two families of irreducible constacyclic codes in Sections \ref{sec-ding1} and \ref{sec-ding2}. In fact, all the known 
results about the irreducible cyclic code $\Exp_3(\C)$ surveyed in \cite{dingyang2013} can be translated into similar 
results about the irreducible $\lambda$-constacyclic code $\C$ under the condition that $\gcd(r,n)=1$ and $r=\ord(\lambda)$.     

We remark that the three codes $\Exp_1(\C)$,  $\Exp_2(\C)$ and $\Exp_3(\C)$ are associated only to the irreducible $\lambda$-constacyclic code $\C$ in Eq. (\ref{eq1}), although similar codes may be associated to a general $\lambda$-constacyclic code. 
This irreducible $\lambda$-constacyclic code $\C$ in Eq. (\ref{eq1}) is very special in the following senses:   
\begin{itemize}
\item Its dimension is known to be $\kappa \ell$, while the dimension of other irreducible $\lambda$-constacyclic codes is knwon to be a divisor of  $\kappa \ell$ only. 
\item Its trace representation is very simple and cannot be reduced to a trace representation over a proper subfield of $\gf(q^{\kappa \ell})$.  
\end{itemize} 
It should be informed that this paper studies only this special irreducible $\lambda$-constacyclic code $\C$ in Eq. (\ref{eq1})
and its associated codes. 

Let $\C^{(t)}$ denote the $\lambda$-constacyclic code of length $n$ over $\gf(q)$ with check polynomial $\m_{\beta^t}(x)$. 
It is easily seen from the trace representations of $\C$ and $\C^{(t)}$ that the two codes are permutation-equivalent if 
$\gcd(t, n)=1$ and $t \equiv 1 \pmod{r}$ \cite{SR2018}. Almost all the results about $\C$ presented in this paper are also valid for all the $\lambda$-constacyclic codes $\C^{(t)}$ with  $\gcd(t, n)=1$  and $t \equiv 1 \pmod{r}$.

\section{Parameters of the code $\C$ in (\ref{eq1}) in the case $\gcd(\frac{q-1}r,n)=1$}\label{sec-ding-1}

In this section, we consider only the case that $\gcd(\frac{q-1}r,n)=1$. Recall that $\ell=\ord_n(q)$ and $\gcd(q,n)=1$. 
In this case, we have 
$$\kappa=\frac{r}{\gcd(\frac{q^\ell-1}n,r)}=\frac{q-1}{\gcd(\frac{q^\ell-1}n, q-1)}. $$
It follows that $\C$ is an $\left[n,\frac{(q-1)\ell}{\gcd(\frac{q^\ell-1}n, q-1)}\right ]$ code over $\gf(q)$ in this case. Theorem \ref{thm5} shows that determining the parameters of the irreducible constacyclic code $\C$ is equivalent to determining the parameters of the irreducible cyclic code $\Exp_1(\C)$. Recall that the irreducible cyclic code $\Exp_1(\C)$ has length $\gcd(q^\ell-1, rn) =r\gcd(\frac{q^\ell-1}{q-1},n)$ and check polynomial $\m_{\beta^{\kappa}}(x)$. Let 
     \begin{equation}\label{eqn-oure}
     	e=\gcd(\frac{q^\ell-1}{q-1}, \frac{q^\ell-1}{\gcd(q^\ell-1,rn)})=\frac{(q^\ell-1) \gcd(q-1,rn)}{(q-1)\gcd(q^\ell-1,rn)}=\frac{q^\ell-1}{\gcd(q^\ell-1,(q-1)n)}.
     \end{equation} 
      
  The following lemma follows directly from the results in \cite{dingyang2013}. We will use it later to prove a main result of this paper. 
        
  \begin{lemma}\label{lem8}
  Let notation be as before, and let $r$ be a positive divisor of $q-1$. 
  	  	\begin{enumerate}
  	\item If $n=r(\frac{q^m-1}{q-1})$ for some integer $m \geq 2$, then  $e=1$ and  $\Exp_1(\C)$ is an $\left[r(\frac{q^m-1}{q-1}), m, r q^{m-1}\right]$ one-weight cyclic code.	
  	\item If $n=r(\frac{q^m-1}{2(q-1)})$ for some even integer $m \geq 2$ and $\gcd(r,2)=1$, then $e=2$ and $\Exp_1(\C)$ is an $\left[r(\frac{q^m-1}{2(q-1)}), m, \frac{r (q^{m-1}-q^{\frac{m-2}2})}2 \right]$ two-weight cyclic code with weight enumerator
  	$$1+ \left(\frac{q^m-1}2\right) z^{\frac{ r(q^{m-1}-q^{\frac{m-2}2})}2 }+\left(\frac{q^m-1}2\right) z^{\frac{r (q^{m-1}+q^{\frac{m-2}2})}2 }.$$
  	\item If $n=r(\frac{q^m-1}{3(q-1)})$ for some odd integer $m \geq 3$, $sm \equiv 0 \pmod{3}$, $\gcd(r, 3)=1$ and $p\equiv 1 \pmod {3}$, then $e=3$ 
  	and $\Exp_1(\C)$ is an $\left[r(\frac{q^m-1}{3(q-1)}), m \right]$ three-weight cyclic code with weight enumerator
  \begin{align*}
  	1&+ \left(\frac{q^m-1}3\right) z^{\frac{r (q^{m-1}-c_1q^{\frac{m-3}2})}3 }+\left(\frac{q^m-1}3\right) z^{\frac{r (q^{m-1}+\frac{1}2(c_1+9d_1)q^{\frac{m-3}2})}3 }\\
  	&+\left(\frac{q^m-1}3\right) z^{\frac{r (q^{m-1}+\frac{1}2(c_1-9d_1)q^{\frac{m-3}2})}3 }
  \end{align*}
  	where $c_1$ and $d_1$ are given by $4q^{m/3}=c_1^2+27d_1^2$, $c_1\equiv 1 \pmod {3}$ and $\gcd(c_1,p)=1$.
  	
  	\item If $n=r(\frac{q^m-1}{4(q-1)})$ for some even integer $m \geq 4$ with $sm \equiv 0 \pmod{4}$, $\gcd(r, 2)=1$, and $p\equiv1 \pmod {4}$, then $e=4$ and $\Exp_1(\C)$ is an $\left[r(\frac{q^m-1}{4(q-1)}),m \right]$ cyclic code with weight enumerator 
  	\begin{align*}
  	1+& \left(\frac{q^m-1}4\right) z^{\frac{r(q^{m-1}+q^{\frac{m-2}2}+2c_1q^{\frac{m-4}4})}{4} }+\left(\frac{q^m-1}4\right) z^{\frac{r(q^{m-1}+q^{\frac{m-2}2}-2c_1q^{\frac{m-4}4})}{4} }\\
  &+ \left(\frac{q^m-1}4\right) z^{\frac{r(q^{m-1}-q^{\frac{m-2}2}+4d_1q^{\frac{m-4}4})}{4} }+\left(\frac{q^m-1}4\right) z^{\frac{r(q^{m-1}-q^{\frac{m-2}2}-4d_1q^{\frac{m-4}4})}{4} }
  \end{align*}
  	where $c_1$ and $d_1$ are given by $q^{m/2}=c_1^2+4d_1^2$, $c_1\equiv 1 \pmod {4}$ and $\gcd(c_1,p)=1$.
  
  	\item Let $m \geq 2$ be an even integer and $e>2$ be a divisor of $(q^m-1)/(q-1)$ such that $\gcd(e, r)=1$. Let 
  	$n=r(\frac{q^m-1}{(q-1)e})$.  
  	 If  $p^j\equiv -1 \pmod {e}$ for a positive integer $j$. Assume that $j$ is the smallest positive integer such that $p^j\equiv -1 \pmod {e}$. Define $\gamma=\frac{s m}{2j}$.
  	\begin{enumerate}
  	 \item If $\gamma$, $p$ and $\frac{p^j+1}e$ are all odd, then $\Exp_1(\C)$ is an $\left[r(\frac{q^m-1}{(q-1)e}), m \right]$ two-weight cyclic code with weight enumerator  
  	 $$1+\left(\frac{q^m-1}{e}\right)z^{\frac{r(q^{m-1}-(e-1)q^{\frac{m-2}2})}{e}} +\left(q^m-1-\frac{q^m-1}e\right)z^{\frac{r(q^{m-1}+q^{\frac{m-2}2})}{e}},$$
  	 provided that $e<q^{m/2}+1$.
  	 \item In all other cases, then $\Exp_1(\C)$ is an $\left[r(\frac{q^m-1}{(q-1)e}), m \right]$ two-weight cyclic code with weight enumerator  
  	 $$1+\left(\frac{q^m-1}{e}\right)z^{\frac{r(q^{m-1}+(-1)^\gamma(e-1)q^{\frac{m-2}2})}{e}} +\left(q^m-1-\frac{q^m-1}{e}\right)z^{\frac{r(q^{m-1}-(-1)^{\gamma}q^{\frac{m-2}2})}{e}},$$
  	 provided that $q^{m/2}+(-1)^{\gamma}(e-1)>0$.
  	\end{enumerate}
  	\end{enumerate}
  \end{lemma}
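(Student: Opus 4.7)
The plan is to read off each case directly from the comprehensive survey of irreducible cyclic codes in \cite{dingyang2013}, using Theorem \ref{thm5} to identify $\Exp_1(\C)$ as an irreducible cyclic code of length $\gcd(q^\ell-1, rn)$ and dimension $\ell$ over $\gf(q)$, with ``index'' $e$ given by \eqref{eqn-oure}.

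First I would verify the three invariants---length, dimension, and index---in each case. In each case $n$ has the form $r(q^m-1)/((q-1)e_0)$ with $e_0 \in \{1,2,3,4\}$ or a general $e_0 = e$ as in item 5, so $(q-1)n = r(q^m-1)/e_0$. Combined with $r \mid q-1$ and the standing hypothesis $\gcd((q-1)/r, n) = 1$ of Section \ref{sec-ding-1}, a short divisibility argument gives $\gcd(q^m-1, (q-1)n) = (q^m-1)/e_0$, whence $e = e_0$ from \eqref{eqn-oure}. Similar brief gcd manipulations show $\ord_n(q) = m$ (so $\Exp_1(\C)$ has dimension $m$) and $\gcd(q^m-1, rn) = r(q^m-1)/((q-1)e_0)$, which is the claimed length.

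With the invariants identified, I would invoke the relevant weight-distribution formula from \cite{dingyang2013}: the trivial one-weight (Simplex-type) case for $e=1$; the semi-primitive quadratic Gauss-sum evaluation for $e=2$; the cubic Gauss-sum decomposition $4q^{m/3} = c_1^2 + 27d_1^2$ under $p \equiv 1 \pmod 3$ and $3 \mid sm$ for $e=3$; the biquadratic decomposition $q^{m/2} = c_1^2 + 4d_1^2$ under $p \equiv 1 \pmod 4$ and $4 \mid sm$ for $e=4$; and, for item 5, the general semi-primitive two-weight formula when $p^j \equiv -1 \pmod e$, with the sign of the Gauss sum determined via Stickelberger's theorem by the parity of $\gamma = sm/(2j)$. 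Substituting our length and dimension into each formula yields the stated weight enumerators verbatim.

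The main obstacle is essentially bookkeeping: the formulas in \cite{dingyang2013} are normalised for length $(q^m-1)/N$ with $N \mid q-1$, whereas our $\Exp_1(\C)$ has length $r(q^m-1)/((q-1)e)$, so each classical weight must be rescaled by the length ratio (equivalently, multiplied by $r$); this accounts for the uniform factor of $r$ appearing in every exponent of $z$. A subsidiary subtlety, specific to item 5, is that the formulas split into two subcases according to whether $\gamma$, $p$, and $(p^j+1)/e$ are all odd, since in that exceptional subcase Stickelberger's theorem contributes an extra sign that swaps which weight class receives the larger frequency, and one must be careful to match this case to subcase (a) rather than (b).
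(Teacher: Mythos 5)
Your proposal takes essentially the same approach as the paper: the paper gives no argument for Lemma \ref{lem8} beyond the remark that it ``follows directly from the results in \cite{dingyang2013},'' and your plan---verifying the length, dimension and index $e$ of the irreducible cyclic code $\Exp_1(\C)$ via the gcd computations under the standing hypothesis $\gcd(\frac{q-1}{r},n)=1$, then reading off each weight enumerator from the corresponding one-weight, semiprimitive, index-$3$ and index-$4$ cases of that survey, with the uniform factor $r$ in the weights and the sign/parity bookkeeping in item 5---is precisely the intended derivation. I see no gap.
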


One of the main results of this paper is documented in the following theorem.   
 
  \begin{theorem}\label{thm9}
  Let $n=u(\frac{q^m-1}{(q-1)e})$, where $m\geq2$, $e$ is a positive divisor of $(q^m-1)/(q-1)$ and $e \leq (q^{m/2}+1)/2$, $u$ is a positive divisor of $q-1$ and $\gcd(e, u)=1$.  Let $\C$ be the irreducible $\lambda$-constacyclic code in Eq. (\ref{eq1}). 
  \begin{enumerate}
  	\item If $e=1$ and $\gcd(\frac{q-1}r, n)=1$, then $\C$ is an $\left[n, um, q^{m-1}\right]$ code over $\gf(q)$ with weight enumerator $$\left[1+(q^m-1)z^{q^{m-1}}\right]^u.$$
  The dual code $\C^{\bot}$ has parameters $[n, n-um, 3]$.  
  \item If $m$ is even, $e=2<q$ and $\gcd(\frac{q-1}r, n)=1$, then $\C$ is an $\left[n, um, \frac{q^{m-1}-q^{\frac{m-2}2}}2 \right]$ code over $\gf(q)$ with weight enumerator
  	$$\left[1+ \left(\frac{q^m-1}2\right) z^{\frac{ q^{m-1}-q^{\frac{m-2}2}}2 }+\left(\frac{q^m-1}2\right) z^{\frac{q^{m-1}+q^{\frac{m-2}2}}2 }\right]^{u}.$$
   The dual code $\C^{\bot}$ has parameters $[n, n-um, 3]$.  	
   
   \item If $e=3<q$, $m \geq 3$ is odd, $sm \equiv 0 \pmod{3}$,  $p\equiv 1 \pmod {3}$ and  $\gcd(\frac{q-1}r, n)=1$, then $\C$ is an $[n, um ]$ code over $\gf(q)$ with weight enumerator
  \begin{align*}
  	\left[1+ \left(\frac{q^m-1}3\right) \left( z^{\frac{q^{m-1}-c_1q^{\frac{m-3}2}}3}+
  	            z^{\frac{q^{m-1}+\frac{1}2(c_1+9d_1)q^{\frac{m-3}2}}3 }+
  	           z^{\frac{q^{m-1}+\frac{1}2(c_1-9d_1)q^{\frac{m-3}2}}3 } \right) \right]^{u}
  \end{align*}
  	where $c_1$ and $d_1$ are given by $4q^{m/3}=c_1^2+27d_1^2$, $c_1 \equiv 1 \pmod {3}$ and $\gcd(c_1,p)=1$.
 The dual code $\C^{\bot}$ has parameters $[n, n-um, 3]$.  
 \item If $e=4<q$, $m \geq 4$ is even, $sm \equiv 0 \pmod{4}$, $p \equiv 1 \pmod {4}$ and  $\gcd(\frac{q-1}r, n)=1$, 
 then $\C$ is an $[n, um ]$ code 
 over $\gf(q)$ with weight enumerator 
  	\begin{align*}
  \left[1+\left(\frac{q^m-1}4\right) z^{\frac{q^{m-1}+q^{\frac{m-2}2}+2c_1q^{\frac{m-4}4}}{4} }+\left(\frac{q^m-1}4\right) z^{\frac{q^{m-1}+q^{\frac{m-2}2}-2c_1q^{\frac{m-4}4}}{4} }\right.\\
 \left.+ \left(\frac{q^m-1}4\right) z^{\frac{q^{m-1}-q^{\frac{m-2}2}+4d_1q^{\frac{m-4}4}}{4} }+\left(\frac{q^m-1}4\right) z^{\frac{q^{m-1}-q^{\frac{m-2}2}-4d_1q^{\frac{m-4}4}}{4} }\right]^u,
  \end{align*}
  	where $c_1$ and $d_1$ are given by $q^{m/2}=c_1^2+4d_1^2$, $c_1\equiv 1 \pmod {4}$ and $\gcd(c_1,p)=1$.
  The dual code $\C^{\bot}$ has parameters $[n, n-um, 3]$.
  \item Let  $\gcd(\frac{q-1}r, n)=1$, $e>2$ and $p^j\equiv -1 \pmod {e}$ for a positive integer $j$. Assume that $j$ is the smallest positive integer such that $p^j\equiv -1 \pmod {e}$. Define $\gamma=\frac{s m}{2j}$.
  	\begin{enumerate}
  	 \item If $\gamma$ is odd, then $\C$ is an $\left[n, um, \frac{q^{m-1}-(e-1)q^{\frac{m-2}2}}{e} \right]$ code over $\gf(q)$ with  weight enumerator  
  	 $$\left[1+\left(\frac{q^m-1}{e}\right)z^{\frac{q^{m-1}-(e-1)q^{\frac{m-2}2}}{e}} +\left(q^m-1-\frac{q^m-1}e\right)z^{\frac{q^{m-1}+q^{\frac{m-2}2}}{e}}\right]^u.$$ 
  	 \item If $\gamma$ is even, then $\C$ is an $\left[n, um, \frac{q^{m-1}-q^{\frac{m-2}2}}{e} \right]$ code over $\gf(q)$ with weight enumerator  
  	 $$\left[1+\left(\frac{q^m-1}{e}\right)z^{\frac{q^{m-1}+(e-1)q^{\frac{m-2}2}}{e}} +\left(q^m-1-\frac{q^m-1}{e}\right)z^{\frac{q^{m-1}-q^{\frac{m-2}2}}{e}}\right]^u.$$
  	 \end{enumerate}
  	 The dual code $\C^{\bot}$ has parameters $[n, n-um, d^{\bot}]$, where 
\begin{align*}
     	d^{\bot}=\begin{cases}
	4 &{\rm if} ~m=4~{\rm and}~e=q+1,\\
	3 &{\rm otherwise}.
\end{cases}
     \end{align*}
  \end{enumerate}
  \end{theorem}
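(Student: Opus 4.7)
The plan is to reduce each of the five statements to the corresponding case of Lemma \ref{lem8} via Theorem \ref{thm5}, and then handle the dual distance through the Pless power moments listed in Section \ref{sec2}.

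First I would verify that $\ell := \ord_n(q) = m$ and $\kappa = u$ under the stated hypotheses. Since $u \mid q-1$ and $e \mid (q^m-1)/(q-1)$, the integer $n = u(q^m-1)/((q-1)e)$ divides $q^m - 1$, so $\ell \mid m$; the bound $e \leq (q^{m/2}+1)/2$ rules out every proper divisor $\ell < m$ by a size comparison, leaving $\ell = m$. The identity $(q^m-1)/n = (q-1)e/u$, together with $u \mid q-1$, $\gcd(e, u) = 1$, and $\gcd((q-1)/r, n) = 1$ (which forces $\gcd((q-1)/r, u) = 1$ since $u \mid n$), gives $\gcd((q^m-1)/n, q-1) = (q-1)/u$. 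Hence $\kappa = u$, $\dim \C = um$, $\Exp_1(\C)$ has length $r \cdot (q^m-1)/((q-1)e)$, and the integer $e$ defined by (\ref{eqn-oure}) coincides with the $e$ in the parametrization of $n$.

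Once these invariants are aligned, Theorem \ref{thm5} says that if $\Exp_1(\C)$ has weight enumerator $W(z^r)$, then $\C$ has weight enumerator $W(z)^u$. Each of the five cases of Theorem \ref{thm9} then reduces to reading off the weight enumerator of $\Exp_1(\C)$ from the matching case of Lemma \ref{lem8}, dividing all exponents by $r$ to obtain $W(z)$, and raising to the $u$-th power. The arithmetic side conditions (parity of $m$, $sm \equiv 0 \pmod 3$ or $\pmod 4$, $p \equiv 1 \pmod 3$ or $\pmod 4$, the existence and minimality of $j$ with $p^j \equiv -1 \pmod e$, and the parity of $\gamma$) in Lemma \ref{lem8} are reproduced verbatim in the hypotheses of cases (1)--(5), so no extra verification is needed there.

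For the dual distance, Theorem \ref{thm4}(3) reduces $d^\perp$ to the minimum distance of $\Exp_2(\C)^\perp$, whose ambient code $\Exp_2(\C)$ has length $n' = (q^m-1)/((q-1)e)$ and dimension $m$ with a now-explicit weight enumerator. Feeding this enumerator into the first three Pless power moments of Section \ref{sec2} yields closed-form expressions for $A_1^\perp$, $A_2^\perp$, $A_3^\perp$; in cases (1)--(4) and in case (5) with $(m,e) \neq (4, q+1)$ one obtains $A_1^\perp = A_2^\perp = 0$ and $A_3^\perp > 0$, giving $d^\perp = 3$. The main obstacle is the ovoid sub-case $m = 4$, $e = q+1$ of item (5): here one first checks that $p^j \equiv -1 \pmod{q+1}$ forces $j = s$ and $\gamma = 2$, so that Lemma \ref{lem8}(5b) delivers the ovoid enumerator $1 + (q-1)(q^2+1) z^{q^2} + q(q-1)(q^2+1) z^{q^2-q}$ for $\Exp_2(\C)$; then one must verify, using the fourth Pless moment in addition, that $A_3^\perp$ also vanishes while $A_4^\perp > 0$, which is the genuinely non-routine step of the whole argument. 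All remaining pieces are an accounting exercise inside the framework of Theorem \ref{thm5}, Theorem \ref{thm4}(3), and Lemma \ref{lem8}.
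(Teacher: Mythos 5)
Your proposal follows essentially the same route as the paper's proof: you match the invariants $\kappa=u$ and $e$, transfer the weight enumerator of $\Exp_1(\C)$ from Lemma \ref{lem8} through Theorem \ref{thm5}, and settle the dual distance by passing to $\Exp_2(\C)^{\perp}$ and testing $A_3^{\perp}$ with the Pless power moments, isolating the ovoid subcase $m=4$, $e=q+1$ exactly as the paper does. The only cosmetic differences are that the paper handles the dual distance in cases (1)--(4) via the sphere-packing-based Theorem \ref{thm7} rather than the power moments, and that you are slightly more explicit than the paper about confirming $A_4^{\perp}>0$ in the exceptional subcase.
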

  
  \begin{proof} 
  In each case, we assume that $\gcd((q-1)/r, n)=1$. Hence, in each case we have  
  $\kappa=u$, and
  $$ \frac{q^m-1}{\gcd(q^m-1,(q-1)n)}=e.$$
  By Theorem \ref{thm5}, $\C$ has weight enumerator $W(z)^u$ if and only if $\Exp_1(\C)$ has weight enumerator $W(z^r)$. 
  Then the desired weight enumerator of $\C$ in each case follows from the weight enumerator of the code  $\Exp_1(\C)$ 
  given in Lemma \ref{lem8}. 
  
  We now settle the parameters of the dual code $\C^\perp$. The dimension of $\C^\perp$ follows from that of $\C$. 
  It remains to treat the minimum distance of $\C^\perp$. 
  Note that $\Exp_2(\C)$ is an irreducible $\lambda$-constacyclic code of length $\frac{q^m-1}{(q-1)e}$ over $\gf(q)$. If $e\leq q-1$, from Theorem \ref{thm7},  $\Exp_2(\C)$ is a $[\frac{q^m-1}{(q-1)e}, \frac{q^m-1}{(q-1)e}-m,3]$ code over $\gf(q)$. By Theorem \ref{thm6}, $\C^{\bot}$ is an $[n, n-um, 3]$ code over $\gf(q)$.
  
  If $e>2$ and $p^j\equiv -1 \pmod {e}$ for a positive integer $j$. From Lemma \ref{lem8}, the code $\Exp_2(\C)$ has weight enumerator
  $$1+\left(\frac{q^m-1}{e}\right)z^{\frac{q^{m-1}+(-1)^\gamma(e-1)q^{\frac{m-2}2}}{e}} +\left(q^m-1-\frac{q^m-1}{e}\right)z^{\frac{q^{m-1}-(-1)^{\gamma}q^{\frac{m-2}2}}{e}}.$$
 If $m=2$, then $\Exp_2(\C)^{\bot}$ is a $\left[\frac{q+1}{e}, \frac{q+1}{e}-2, 3\right]$ MDS code over $\gf(q)$. If $m\geq 4$, from the Pless power momnets, we have
  	 $$6A_3^{\bot}= \frac{(q^m-1)[q^m-(-1)^{\gamma}(e^2-3e+2)q^{\frac{m}2}-(q-2)e^2-3e+1]}{e^3}.$$ 
  	 It follows that $A_3^{\bot}=0$ if and only if 
  	 $$\Delta:=q^m-(-1)^{\gamma}(e^2-3e+2)q^{\frac{m}2}-(q-2)e^2-3e+1=0.$$
  	 \begin{itemize}
  	 	\item When $\gamma$ is odd, $q^{\frac{m}2}\equiv -1 \pmod {e}$. It is easy to check that 
  	  $$\Delta=e\left[ (q^{\frac{m}2}-q+2)e-3q^{\frac{m}2}-3 \right]+q^{m}+2q^{\frac{m}2}+1>0.$$
  	  Hence, $\Exp_2(\C)^{\bot}$ is a $\left[\frac{q^m-1}{(q-1)e},\frac{q^m-1}{(q-1)e}-m, 3\right]$ code over $\gf(q)$.
  	   \item When $\gamma$ is even, $q^{\frac{m}2}\equiv 1 \pmod {e}$. It is easy to check that 
  	  $$\Delta=(q^{\frac{m}2}-1)(q^{\frac{m}2}-1-e^2+3e)-(q-1)e^2=0$$
  	  if and only if 
     $$\left(\frac{q^{\frac{m}2}-1}{e}\right)\left[\left(\frac{q^{\frac{m}2}-1}{e}\right)-e+3 \right]=q-1.$$
     This equation holds if and only if $m=4$ and $e=q+1$. Therefore, $\Exp_2(\C)$ is a $$\left[\frac{q^m-1}{(q-1)e},\frac{q^m-1}{(q-1)e}-m, d\right]$$ code over $\gf(q)$, where
     \begin{align*}
     	d=\begin{cases}
	4 &{\rm if} ~m=4,~{\rm and}~e=q+1\\
	3 &{\rm otherwise}.
\end{cases}
     \end{align*}
  	 \end{itemize}
  	 According to Theorem \ref{thm6}, $d^{\bot}$ is equal to the minimum distance of $\Exp_2(\C)^{\bot}$. The desired result follows. 
  \end{proof}
  
  The results in the special case $u=1$ in Theorem \ref{thm9} may be proved with some results in \cite{SR2018}.  
  If this is possible, it may take some work to do so. 
  When $u>1$, the results in Theorem \ref{thm9} may not be derived from \cite{SR2018}. We now elaborate on 
  this statement. In \cite{SR2018}, the following parameter $L$ is used to determine the number 
  of nonzero weights in the irreducible $\lambda$-constacyclic code $\C$ in Eq. (\ref{eq1}): 
  \begin{eqnarray}\label{eqn-indianL}
  L:=\gcd\left( \frac{q^{\kappa \ell}-1}{q-1},  \frac{q^{\kappa \ell}-1}{nr}  \right). 
\end{eqnarray}    
The parameter $e$ defined in Eq. (\ref{eqn-oure}) may be different from the $L$ defined above. 
In \cite{SR2018}, Gaussian periods of order $L$ are used to express the weight distribution of 
the code $\C$ in Eq. (\ref{eq1}). This way of determining the weight distribution of $\C$ is infeasible in most cases, as $L$ could be very large and Gaussian periods of order $L$ are not evaluated for most orders $L$. Below is such example. 

\begin{example}\label{exam-comparison} 
Let $q=7$, $m=4$, $r=6$ and $n=800$. Then $\ord_{rn}(q)=8$. Let $\alpha$ be a primitive element of $\gf(q^8)$ with $\alpha^8 + 4\alpha^3 + 6\alpha^2 + 2\alpha + 3=0$. Then the corresponding $\lambda=3$, which is a primitive element of $\gf(7)$. Let $\C$ be the corresponding irreducible $\lambda$-constacyclic code in Eq. (\ref{eq1}). Then $\C$ has parameters $[800,8,343]$ and weight enumerator $1+4800z^{343}+5760000z^{686}=(1+2400z^{343})^2$. For this code, the parameter $e$ defined in Eq. (\ref{eqn-oure}) is $1$ and the corresponding $u=2$ in terms of the notation of Theorem \ref{thm9}. Hence, the parameters and the weight enumerator of this code $\C$ follow from the conclusions of the first case in Theorem \ref{thm9}. However, the parameter $L$ in Eq. (\ref{eqn-indianL}) is equal to $1201$, and  the parameters and the weight enumerator of this code $\C$ cannot be deduced from  the results in \cite{SR2018}, as Gaussian periods of order $1201$ over $\gf(7^8)$ are not evaluated. Note that $1201$ is a prime.   
\end{example} 

Below we present a number of examples for illustrating the cases in  Theorem \ref{thm9}. 

\begin{example} 
Let $m=2$, $q=2^4$, $r=q-1=15$, $e=1$, $u=3$. Then $n=u \frac{q^m-1}{(q-1)e}=51$. Let $\alpha$ be a generator of $\gf(q^6)^*$ with 
$  
\alpha^{24} + \alpha^{16} + \alpha^{15} + \alpha^{14} + \alpha^{13} + \alpha^{10} + \alpha^9 + \alpha^7 + \alpha^5 +
    \alpha^3 + 1=0. 
    $
Then the $\lambda$-constacyclic code $\C$ in Eq. (\ref{eq1}) has parameters $[51, 6, 16]$ and weight enumerator 
$$
16581375z^{48} + 195075z^{32} + 765z^{16} + 1. 
$$
$\C^\perp$ has parameters $[51, 45, 3]$. These results are consistent with the conclusions in the first case in Theorem \ref{thm9}. 
\end{example}

\begin{example} 
Let $m=2$, $q=7$, $r=q-1=6$, $e=2$, $u=3$. Then $n=u \frac{q^m-1}{(q-1)e}=12$. Let $\alpha$ be a generator of $\gf(q^6)^*$ with 
$ 
\alpha^6 + \alpha^4 + 5\alpha^3 + 4\alpha^2 + 6\alpha + 3=0
    $. 
Then the  $\lambda$-constacyclic code $\C$ in Eq. (\ref{eq1}) has parameters $[12, 6, 3]$ and weight enumerator 
$
(24z^4 + 24z^3 + 1)^3. 
$
$\C^\perp$ has parameters $[12, 6, 3]$. These results are consistent with the conclusions in the second case in Theorem \ref{thm9}. 
\end{example}    

\begin{example} 
Let $m=3$, $q=7$, $r=q-1=6$, $e=3$, $u=2$. Then $n=u \frac{q^m-1}{(q-1)e}=38$. Let $\alpha$ be a generator of $\gf(q^6)^*$ with 
$ 
\alpha^6 + \alpha^4 + 5\alpha^3 + 4\alpha^2 + 6\alpha + 3=0
    $. 
Then the $\lambda$-constacyclic code $\C$ in Eq. (\ref{eq1}) has parameters $[38, 6, 15]$ and weight enumerator 
$
(114z^{18} + 114z^{16} + 114z^{15} + 1 )^2. 
$
$\C^\perp$ has parameters $[38, 32, 3]$. These results are consistent with the conclusions in the third case in Theorem \ref{thm9}. 
\end{example}

\begin{example} 
Let $m=4$, $q=13$, $r=q-1=12$, $e=4$, $u=1$. Then $n=u \frac{q^m-1}{(q-1)e}=595$. Let $\alpha$ be a generator of $\gf(q^4)^*$ with 
$ 
\alpha^4 + 3\alpha^2 + 12\alpha + 2=0
    $. 
Then the $\lambda$-constacyclic code $\C$ in Eq. (\ref{eq1}) has parameters $[595, 4, 540]$ and weight enumerator 
$
7140z^{555} + 7140z^{552} + 7140z^{550} + 7140z^{540} + 1. 
$
$\C^\perp$ has parameters $[595, 591, 3]$. These results are consistent with the conclusions in the fourth case in Theorem \ref{thm9}. 
\end{example} 

\begin{example} 
Let $m=2$, $q=p=11$, $r=q-1=10$, $e=4$, $u=3$. Then $n=u \frac{q^m-1}{(q-1)e}=9$. Let $\alpha$ be a generator of $\gf(q^6)^*$ with 
$ 
\alpha^6 + 3\alpha^4 + 4\alpha^3 + 6\alpha^2 + 7\alpha + 2=0
    $. 
Then the  $\lambda$-constacyclic code $\C$ in Eq. (\ref{eq1}) has parameters $[9, 6, 2]$ and weight enumerator 
$
(90z^3 + 30z^2 + 1)^3. 
$
$\C^\perp$ has parameters $[9, 3, 3]$. These results are consistent with the conclusions in Case 5.a in Theorem \ref{thm9}. 
\end{example} 

\begin{example} 
Let $m=4$, $q=p=11$, $r=q-1=10$, $e=3$, $u=2$. Then $n=u \frac{q^m-1}{(q-1)e}=976$. Let $\alpha$ be a generator of $\gf(q^8)^*$ with 
$ 
\alpha^8 + 7\alpha^4 + 7\alpha^3 + \alpha^2 + 7\alpha + 2=0
    $. 
Then the $\lambda$-constacyclic code $\C$ in Eq. (\ref{eq1}) has parameters $[976, 8, 440]$ and weight enumerator 
$
(4880z^{451} + 9760z^{440} + 1)^2. 
$
$\C^\perp$ has parameters $[976, 968, 3]$. These results are consistent with the conclusions in Case 5.b in Theorem \ref{thm9}. 
\end{example}

The following is a list of corollaries of Theorem \ref{thm9}.  

\begin{corollary}\label{cor10}
	Let $m\geq 2$ and $q$ be a prime power. Let $r$ be a positive divisor of $q-1$ such that $\gcd(\frac{q-1}r,m)=1$. 
	Let $n=\frac{q^m-1}{q-1}$. Then the irreducible $\lambda$-constacyclic code over $\gf(q)$ with $\ord(\lambda)=r$ 
	in Eq. (\ref{eq1}) is monomially-equivalent to the Simplex code {\rm Simplex}($q, m$) and its dual is monomially-equivalent 
	to the Hamming code ${\rm Hamming}($q, m$)$. 
\end{corollary}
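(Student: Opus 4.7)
The plan is to apply Case 1 of Theorem \ref{thm9} with $u=1$ and $e=1$, after which a short classical argument identifies the resulting one-weight code with $\mathrm{Simplex}(q,m)$. First, I would verify the hypothesis $\gcd\bigl((q-1)/r,\,n\bigr)=1$ required by Theorem \ref{thm9}. Since $(q-1)/r$ divides $q-1$, we have $q\equiv 1\pmod{(q-1)/r}$, so
$$
n=\frac{q^m-1}{q-1}=1+q+q^2+\cdots+q^{m-1}\equiv m\pmod{(q-1)/r}.
$$
Hence $\gcd\bigl((q-1)/r,\,n\bigr)=\gcd\bigl((q-1)/r,\,m\bigr)$, which equals $1$ by hypothesis. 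Case 1 of Theorem \ref{thm9} then asserts that $\C$ is a one-weight $\bigl[(q^m-1)/(q-1),\,m,\,q^{m-1}\bigr]$ code with weight enumerator $1+(q^m-1)z^{q^{m-1}}$, matching the parameters and weight distribution of $\mathrm{Simplex}(q,m)$ exactly.

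Next, I would deduce monomial equivalence with $\mathrm{Simplex}(q,m)$. The cleanest route is Bonisoli's classical theorem: every nonzero one-weight linear code over $\gf(q)$ with no zero columns is monomially equivalent to a $t$-fold replication of a Simplex code for some $t\geq 1$, and comparing length and dimension forces $t=1$ in our setting. A direct alternative is to inspect the trace representation in Eq. (\ref{eq1}): the columns of a generator matrix of $\C$ are $\theta^0,\theta^1,\ldots,\theta^{n-1}\in\gf(q^m)^*$, and the gcd condition verified above can be used to show that no two of these elements lie in the same $\gf(q)^*$-coset of $\gf(q^m)^*$; since there are exactly $n=|\mathrm{PG}(m-1,q)|$ such cosets, the columns form a complete system of representatives for the points of $\mathrm{PG}(m-1,q)$, which is precisely the defining column set of $\mathrm{Simplex}(q,m)$ up to permutation and scalar multiplication.

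Finally, since monomial equivalence is preserved under duality, $\C^\perp$ is monomially equivalent to $\mathrm{Simplex}(q,m)^\perp=\mathrm{Hamming}(q,m)$, as required. There is essentially no obstacle in this proof: the only point that requires a small calculation is the reduction $n\equiv m\pmod{(q-1)/r}$, which transfers the coprimality hypothesis on $m$ into the form required by Theorem \ref{thm9}; the remainder is a direct citation of Case 1 of Theorem \ref{thm9} combined with the classical characterization of simplex-parameter one-weight codes.
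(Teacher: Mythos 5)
Your proof is correct, and the first half (verifying $\gcd((q-1)/r,n)=\gcd((q-1)/r,m)=1$ via $n\equiv m\pmod{(q-1)/r}$ and then invoking Case 1 of Theorem \ref{thm9} with $u=e=1$) is exactly what the paper does, just with the gcd computation spelled out rather than labelled ``easily seen.'' Where you diverge is in the equivalence step. The paper works from the dual side: it observes that $\C^\perp$ has the parameters $[n,n-m,3]$ of the Hamming code, cites the classical uniqueness of linear codes with Hamming parameters to conclude $\C^\perp$ is monomially equivalent to ${\rm Hamming}(q,m)$, and then (implicitly) dualizes to get the Simplex statement. You work from the primal side: you identify $\C$ itself with ${\rm Simplex}(q,m)$ using Bonisoli's characterization of one-weight codes (or, in your alternative, by checking directly that the columns $\theta^0,\dots,\theta^{n-1}$ of the trace generator matrix represent the $n$ distinct points of $\PG(m-1,q)$ --- which indeed follows since $\theta^{i-j}\in\gf(q)^*$ forces $n\mid(i-j)$ under $\gcd((q-1)/r,n)=1$), and then dualize to get the Hamming statement. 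Both routes are sound and of comparable length. The paper's route needs only the dual's parameters and the uniqueness of perfect single-error-correcting linear codes; yours exploits the full one-weight structure of $\C$ delivered by Theorem \ref{thm9}, and your geometric column argument has the added benefit of being self-contained and of making transparent exactly where the hypothesis $\gcd((q-1)/r,m)=1$ is used.
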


\begin{proof} 
It is easily seen that 
$$
\gcd\left(\frac{q-1}{r}, n\right)=\gcd\left(\frac{q-1}{r}, m\right)=1. 
$$
Let $u=1$ and $e=1$  in Theorem \ref{thm9}. We then deduce  from the first case in Theorem \ref{thm9} that $\C$ has the same parameters as the Simplex code and $\C^\perp$ has the same parameters as the Hamming code. It is well known that every linear code sharing the parameters of the Hamming code must be monomially-equivalent to the Hamming code \cite{HP2003}.    
\end{proof}

When $r=q-1$, Corollary \ref{cor10} was proven in \cite{FWF2017,HD2019}. With Corollary \ref{cor10}, we proved that there are more classes of constacyclic codes which are monomially-equivalent to the Hamming code.

\begin{corollary}\label{cor11-1} 
Let $n=2(\frac{q^m-1}{q-1})$, where $m\geq 2$ and $q$ is an odd prime power. Let $r$ be a positive divisor of $q-1$ such that $\gcd(\frac{q-1}r,2m)=1$. Let $\C$ be the irreducible $\lambda$-constacyclic code in Eq. (\ref{eq1}). Then $\C$ is an $[n,2m, q^{m-1}]$ two-weight code with weight enumerator $$1+2(q^m-1)z^{q^{m-1}}+(q^m-1)^2z^{2q^{m-1}}.$$ 
Its dual code has parameters $[n, n-2m, 3]$. 
	\end{corollary}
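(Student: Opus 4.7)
The plan is to derive this as a direct specialization of the first case of Theorem~\ref{thm9}, with parameters $u=2$ and $e=1$. Writing $n = 2(q^m-1)/(q-1) = u(q^m-1)/((q-1)e)$ exhibits the required form. The conditions $e\mid (q^m-1)/(q-1)$, $e\leq (q^{m/2}+1)/2$, and $\gcd(e,u)=1$ are all trivially satisfied when $e=1$, so the only substantive hypothesis that needs verification is $\gcd((q-1)/r, n) = 1$.

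To verify this, set $A := (q-1)/r$. Since $A \mid q-1$, we have $q \equiv 1 \pmod{A}$, and hence
$$\frac{q^m-1}{q-1} = 1 + q + q^2 + \cdots + q^{m-1} \equiv m \pmod{A}.$$
Multiplying by $2$ gives $n \equiv 2m \pmod{A}$, so that $\gcd(A, n) = \gcd(A, 2m)$. The hypothesis $\gcd((q-1)/r, 2m) = 1$ therefore supplies exactly the required coprimality. (Note that the factor of $2$ in $2m$ matters only in the odd-characteristic setting assumed here, since $q-1$ is even.)

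With all hypotheses of Theorem~\ref{thm9}(1) in hand, I would immediately conclude that $\C$ is an $[n, 2m, q^{m-1}]$ code over $\gf(q)$ whose weight enumerator is $\bigl[1 + (q^m-1)z^{q^{m-1}}\bigr]^{2}$, and that $\C^\perp$ has parameters $[n, n-2m, 3]$. Expanding the square yields
$$1 + 2(q^m-1)z^{q^{m-1}} + (q^m-1)^{2} z^{2q^{m-1}},$$
which exhibits $\C$ as a two-weight code with nonzero weights $q^{m-1}$ and $2q^{m-1}$, matching the claim exactly. There is no real obstacle here beyond the elementary congruence calculation in the gcd verification; all the heavy lifting is already encapsulated in Theorem~\ref{thm9}.
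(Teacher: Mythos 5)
Your proposal is correct and follows exactly the paper's own route: specialize Theorem~\ref{thm9}(1) with $u=2$, $e=1$, after checking $\gcd(\frac{q-1}{r},n)=\gcd(\frac{q-1}{r},2m)=1$. The only difference is that you spell out the congruence $n\equiv 2m \pmod{(q-1)/r}$ that the paper dismisses as ``straightforward,'' which is a welcome addition rather than a deviation.
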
 
	
\begin{proof} 
It is straightforward to see that 
$$
\gcd\left(\frac{q-1}{r}, n  \right)= \gcd\left(\frac{q-1}{r}, 2m   \right)=1. 
$$
Let $u=2$ and $e=1$  in Theorem \ref{thm9}. Then the desired conclusions follow.  
\end{proof}

 \begin{example}
 	Let $m=3$, $q=3$ and $r=2$. Let $\alpha$ be a generator of $\gf(q^6)^*$ with $\alpha^6 + 2\alpha^4 + \alpha^2 + 2\alpha + 2=0$. 
 	Then the irreducible negacyclic code $\C$ of length $q^m-1$ over $\gf(q)$ in Eq. (\ref{eq1}) has parameters $[26,6,9]$ and weight enumerator $1+52z^{9}+676z^{18}$. The dual code $\C^{\bot}$ has parameters $[26,20,3]$. 
 \end{example}

\begin{corollary}\label{cor12-1} 
Let $n=3(\frac{q^m-1}{q-1})$, where $m\geq 2$ and $q \equiv 1 \pmod{3}$ is a prime power. Let $r$ be a positive divisor of $q-1$ such that $\gcd(\frac{q-1}r,3m)=1$. Let $\C$ be the irreducible $\lambda$-constacyclic code in Eq. (\ref{eq1}). Then $\C$ is an $[n,3m, q^{m-1}]$ three-weight code with weight enumerator $$1+3(q^m-1)z^{q^{m-1}}+3(q^m-1)^2z^{2q^{m-1}}+(q^m-1)^3z^{3q^{m-1}},$$ and $\C^\perp$ has parameters $[n, n-3m, 3]$.  	
\end{corollary}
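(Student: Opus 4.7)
The plan is to reduce the statement to the first case of Theorem \ref{thm9} with parameters $u=3$ and $e=1$. To do so, I first need to verify that the hypotheses of Theorem \ref{thm9} are satisfied with these choices. The assumption $q\equiv 1\pmod 3$ guarantees that $u=3$ is a positive divisor of $q-1$. The choice $e=1$ trivially divides $(q^m-1)/(q-1)$, satisfies $e\leq (q^{m/2}+1)/2$, and gives $\gcd(e,u)=\gcd(1,3)=1$. Moreover $n=3\cdot\frac{q^m-1}{q-1}=u\cdot\frac{q^m-1}{(q-1)e}$, as required.

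The key step is to show $\gcd(\frac{q-1}{r},n)=1$ using the hypothesis $\gcd(\frac{q-1}{r},3m)=1$. Suppose a prime $p$ divides both $\frac{q-1}{r}$ and $n=3\cdot\frac{q^m-1}{q-1}$. Then $p\mid q-1$, so $q\equiv 1\pmod p$, whence
\[
\frac{q^m-1}{q-1}=1+q+q^2+\cdots+q^{m-1}\equiv m\pmod p.
\]
Therefore $p$ divides $3m$. This contradicts $\gcd(\frac{q-1}{r},3m)=1$, and hence no such prime exists. This is the main (and in fact only nontrivial) obstacle; once it is cleared, everything else is a direct invocation.

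With the hypotheses verified, the first case of Theorem \ref{thm9} applies, yielding that $\C$ is an $[n,3m,q^{m-1}]$ code over $\gf(q)$ with weight enumerator
\[
\bigl[1+(q^m-1)z^{q^{m-1}}\bigr]^{3}
\]
and with dual parameters $[n,n-3m,3]$. Expanding by the binomial theorem,
\[
\bigl[1+(q^m-1)z^{q^{m-1}}\bigr]^{3}=1+3(q^m-1)z^{q^{m-1}}+3(q^m-1)^2z^{2q^{m-1}}+(q^m-1)^3z^{3q^{m-1}},
\]
which matches the claimed weight enumerator exactly and makes clear that $\C$ is a three-weight code. This completes the proof.
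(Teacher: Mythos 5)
Your proposal is correct and follows essentially the same route as the paper: both reduce to the first case of Theorem \ref{thm9} with $u=3$, $e=1$, and both hinge on the observation that $q\equiv 1$ modulo any divisor of $\frac{q-1}{r}$ forces $\frac{q^m-1}{q-1}\equiv m$, so that $\gcd(\frac{q-1}{r},n)=\gcd(\frac{q-1}{r},3m)=1$ (the paper phrases this as a chain of gcd equalities, you as a prime-divisor contradiction, but it is the same computation). One cosmetic remark: avoid reusing $p$ for your auxiliary prime, since the paper reserves $p$ for the characteristic of $\gf(q)$.
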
 

\begin{proof}
Let $q=rt+1$. Then 
\begin{eqnarray*}
\gcd\left(\frac{q-1}{r}, n\right) 
&=& \gcd\left(\frac{q-1}{r}, 3\frac{q^m-1}{q-1}\right) \\
&=& \gcd(t, 3(q^{m-1}+q^{m-2}+ \cdots + q +1)) \\ 
&=& \gcd\left(\frac{q-1}{r}, 3m\right) \\
&=& 1.  
\end{eqnarray*} 
Let $u=3$ and $e=1$ in Theorem \ref{thm9}. Then the desired conclusions follow. 
\end{proof}

 \begin{example}
 	Let $m=4$, $q=4$ and $r=q-1$. Let $\alpha$ be a generator of $\gf(q^{12})^*$ with
 	$
 	\alpha^{24} + \alpha^{16} + \alpha^{15} + \alpha^{14} + \alpha^{13} + \alpha^{10} + \alpha^9 + \alpha^7 + \alpha^5 +
    \alpha^3 + 1=0. 
 	$ 
 	Then the irreducible $\lambda$-constacyclic code $\C$ of length $3(\frac{q^m-1}{q-1})$ over $\gf(q)$ in Eq. (\ref{eq1}) has parameters $[255,12,64]$ and weight enumerator 
 	$$1+765 z^{64}+195075z^{128}+16581375 z^{192}.$$ 
 	The dual code $\C^{\bot}$ has parameters $[255,243,3]$. 
 \end{example}

\section{A family of irreducible cyclic codes over $\gf(q)$ with parameters $[2n, (n-1)/2, d \geq 2(\sqrt{n}+1)]$}\label{sec-ding1} 

 Throughout this section, let $q$ be an odd prime power and let $n$ be an odd prime such that $m:=\ord_{2n}(q)=(n-1)/2$. 
 Note that $\gcd(2,n)=1$. We have then 
 $\kappa=1$. Consequently, $\ord_n(q)=\ord_{2n}(q)=(n-1)/2$. 
  
Let $\alpha$ be a primitive element of $\gf(q^m)$ and put $\beta=\alpha^{(q^m-1)/2n}$. Then $\beta^n=-1$. By definition and assumption, we 
have 
$$
C_1^{(q,2n)}=\{1, q, q^2, \ldots, q^{m-1}\} \bmod{2n}. 
$$  
Clearly, all the $(n-1)/2$ elements in $C_1^{(q,2n)}$ are odd. Take any $h$ from $\{2i+1: 0 \leq i \leq n-1\} \setminus 
(C_1^{(q,2n)} \cup \{n\})$. Then the cosets $C_h^{(q,2n)}$ and $C_1^{(q,2n)}$ are disjoint and 
$$
C_1^{(q,2n)} \cup C_h^{(q,2n)} =  \{2i+1: 0 \leq i \leq n-1\} \setminus \{n\}.  
$$ 
Then for $i \in \{1, h\}$ the minimal polynomial  
$$
\m_{\beta^i}(x) =\prod_{j \in C_i^{(q,2n)}} (x-\beta^j)
$$  
and 
$$
x^n+1=(x+1) \m_{\beta}(x)  \m_{\beta^h}(x).  
$$
Let $\C(q,n,i; 2)$ denote the negacyclic code of length $n$ over $\gf(q)$ with check polynomial $ \m_{\beta^i}(x)$. Then 
the dimension of  $\C(q,n,i; 2)$ equals $(n-1)/2$ for $i \in \{1, h\}$. Consequently, its dual has dimension  $(n+1)/2$. We first prove the following theorem. 

\begin{theorem}\label{thm-mainfeb26}
Let notation and assumptions be the same as before. Then  the irreducible negacyclic code $\C(q,n,1; 2)$ has parameters  $[n, (n-1)/2, d]$  with 
$d \geq \sqrt{n}+1$ and its dual $\C(q,n,1; 2)^\perp$ has parameters  $[n, (n+1)/2, d^\perp]$ with 
$d^\perp \geq \sqrt{n}$. 
\end{theorem}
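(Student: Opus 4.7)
The plan is to identify $\C(q,n,1;2)$ with a classical quadratic residue (QR) code and then apply the square-root bound. First, I would use Theorem~\ref{thm-bridge1} to pass from the negacyclic setting to a cyclic one: since $r=2$ and $n$ is an odd prime, $\gcd(r,n)=1$, so $\C(q,n,1;2)$ is permutation-equivalent to the irreducible cyclic code $\Exp_3(\C)$ of length $n$ over $\gf(q)$ with check polynomial $\m_{\beta^2}(x)$. Here $\beta^2=\alpha^{(q^m-1)/n}$ is a primitive $n$-th root of unity in $\gf(q^m)$, and the cyclotomic coset $C_1^{(q,n)}$ has size $\ord_n(q)=m=(n-1)/2$.

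Next I would identify $\Exp_3(\C)$ as a QR code. Because $n$ is prime, $(\Z/n\Z)^*$ is cyclic of order $n-1$ and contains a unique subgroup of order $(n-1)/2$, namely the set $\QR_n^*$ of nonzero quadratic residues modulo $n$. The hypothesis $\ord_n(q)=(n-1)/2$ therefore forces $\langle q\rangle=\QR_n^*$, so $C_1^{(q,n)}=\QR_n^*$, and $\Exp_3(\C)$ is exactly the $[n,(n-1)/2]$ expurgated QR code over $\gf(q)$ with defining set $\{0\}\cup\QN_n$. A short computation of the defining set of the dual---taking the complement of $-\QR_n$ modulo $n$, which equals $\QR_n$ if $n\equiv 1\pmod{4}$ and $\QN_n$ if $n\equiv 3\pmod{4}$---shows that $\Exp_3(\C)^\perp$ is the augmented $[n,(n+1)/2]$ QR code. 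The dimension assertions in the theorem follow at once, and the dual distance bound $d^\perp\geq\sqrt{n}$ is the classical square-root bound applied to the augmented QR code. Transferring back through the permutation-equivalence then yields the dual portion of the theorem.

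The remaining task---and the main obstacle---is the primal bound $d\geq\sqrt{n}+1$. The natural convolution argument, namely that for $c\in\Exp_3(\C)$ and $t\in\QN_n^*$ one has $c(x)c(x^t)\equiv 0\pmod{x^n-1}$ because the Fourier supports of $c$ and $c(x^t)$ lie in the disjoint sets $\QR_n^*$ and $\QN_n^*$, yields only $d^2\geq n$ in general. The additional $+1$ comes from the extra zero-sum constraint $c(1)=0$ that every codeword of $\Exp_3(\C)$ satisfies (since $0$ lies in its defining set): this restricts the Fourier support of $c$ strictly to $\QR_n^*$, eliminates the possibility of equality in the convolution argument, and---via the strengthened form of the square-root bound tailored to the $(n-1)/2$-dimensional QR code---produces the improved inequality. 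Carrying out this refinement cleanly is the delicate step of the proof.
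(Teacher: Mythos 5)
Your overall route is the same as the paper's: you use Theorem~\ref{thm-bridge1} (valid here since $\gcd(2,n)=1$) to replace $\C(q,n,1;2)$ by the length-$n$ irreducible cyclic code $\Exp_3(\C(q,n,1;2))$ with check polynomial $\m_{\beta^2}(x)$, you observe that $\ord_n(q)=(n-1)/2$ forces $\langle q\rangle$ to equal the group of quadratic residues modulo $n$ (the paper records this as Lemma~\ref{lem-2801}), and you conclude that $\m_{\beta^2}(x)$ is one of the two degree-$(n-1)/2$ irreducible factors of $(x^n-1)/(x-1)$, so that $\Exp_3(\C(q,n,1;2))$ is an expurgated quadratic-residue code and its dual an augmented one. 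Up to this point your argument coincides with Lemma~\ref{lem-RQCbound} of the paper, including the dimension count and the transfer back through the permutation equivalence.

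The gap is in the final step. The paper obtains both inequalities --- $d\geq\sqrt{n}+1$ for the $[n,(n-1)/2]$ code and $d^\perp\geq\sqrt{n}$ for the $[n,(n+1)/2]$ code --- by a single citation of the square-root bound for quadratic-residue codes, \cite[Theorem 6.6.22]{HP2003}, which is stated for precisely this pair of codes. You instead attempt to re-derive the bound and leave the decisive part unproven: you assert that the ``$+1$'' will come from ``a strengthened form of the square-root bound'' whose clean execution you explicitly defer. Moreover, the sketch you give does not point toward a working argument. For an even-like codeword $c$ (one with $c(1)=0$) and a nonresidue $t$, the product $c(x)c(x^t)$ is congruent to $0$ modulo $x^n-1$, not to a nonzero multiple of $1+x+\cdots+x^{n-1}$; so the term-counting that yields $d_o^2\geq n$ for odd-like codewords yields no inequality whatsoever here, and saying that the extra constraint ``eliminates the possibility of equality'' is not a proof. (The same issue silently affects your dual bound, since the minimum weight of the augmented code could a priori be attained by an even-like codeword.) As written, the central claim $d\geq\sqrt{n}+1$ is therefore not established. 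The repair is simply to invoke the standard square-root bound theorem for QR codes, as the paper does, rather than the incomplete convolution heuristic.
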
 

For the code $\C(q,n,i; 2)$, the lower bound on the minimum distance developed in Theorem 4.3 in \cite{SR2018} is negative and not useful. We have to prove the desired square-root bounds with a different approach.  To this end, we need to do some preparations.  

\begin{lemma}\label{lem-2801}
Let $q$ be an odd prime power and let $n$ be an odd prime such that $\ord_{2n}(q)=(n-1)/2$ and $n >q$. Then $q$ is a quadratic residue modulo $n$.  
\end{lemma}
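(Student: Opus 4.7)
The plan is to observe that the conclusion is essentially an immediate consequence of Euler's criterion once the order hypothesis is transferred from modulus $2n$ to modulus $n$. So the proof will have just two steps: reduce $\ord_{2n}(q)$ to $\ord_n(q)$, and then invoke Euler's criterion.

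First I would verify that $\gcd(q,n)=1$: since $n$ is a prime and $n>q\geq 2$, we have $n\nmid q$, so $q$ is a unit modulo $n$. Combined with $q$ odd this gives $\gcd(q,2n)=1$, so all the orders below are defined.

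Next, I would show $\ord_n(q)=\ord_{2n}(q)=(n-1)/2$. By the Chinese Remainder Theorem, $(\Z/2n\Z)^{*}\cong (\Z/2\Z)^{*}\times (\Z/n\Z)^{*}\cong (\Z/n\Z)^{*}$ because $q$ is odd (so $q\equiv 1\pmod 2$, hence $\ord_2(q)=1$). Thus $\ord_{2n}(q)=\mathrm{lcm}(\ord_2(q),\ord_n(q))=\ord_n(q)$, and the hypothesis gives $\ord_n(q)=(n-1)/2$.

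Finally, since $\ord_n(q)=(n-1)/2$, we have $q^{(n-1)/2}\equiv 1\pmod n$, so by Euler's criterion $q$ is a quadratic residue modulo $n$. There is no genuine obstacle here; the only thing to watch is that the oddness of $q$ is what forces $\ord_{2n}(q)$ to coincide with $\ord_n(q)$ rather than double it, and the condition $n>q$ is used only to ensure $\gcd(q,n)=1$ so that $q$ even makes sense as an element of $(\Z/n\Z)^{*}$.
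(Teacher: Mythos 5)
Your proof is correct. Both you and the paper first establish $\ord_n(q)=\ord_{2n}(q)=(n-1)/2$ and then deduce that $q$ is a quadratic residue, but the two halves are handled differently. For the order reduction, the paper cites the computation at the start of Section 5 (which rests on Lemma \ref{lem1} with $r=2$ and $\kappa=1$), whereas your CRT argument $\ord_{2n}(q)=\lcm(\ord_2(q),\ord_n(q))=\ord_n(q)$, using only that $q$ is odd, is self-contained and arguably cleaner. For the closing step, the paper writes $q\equiv\zeta^i\pmod n$ for a primitive root $\zeta$ of $n$ and uses $\ord_n(q)=(n-1)/\gcd(i,n-1)$ to force $\gcd(i,n-1)=2$, hence $i$ even; you instead invoke Euler's criterion, noting that $q^{(n-1)/2}\equiv 1\pmod n$ while the Legendre symbol is $\pm 1$ and $1\not\equiv -1\pmod n$. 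These two closing arguments are essentially equivalent in content (the index computation is one standard proof of Euler's criterion), so neither buys anything substantial over the other; yours avoids introducing a primitive root, while the paper's makes explicit which power of $\zeta$ the element $q$ is. Your observation that $n>q$ serves only to guarantee $\gcd(q,n)=1$ is accurate (and indeed that coprimality is already implicit in the hypothesis that $\ord_{2n}(q)$ is defined), matching the role the inequality plays in the paper's own argument.
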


\begin{proof}
It was shown at the beginning of this section that  $\ord_n(q)=\ord_{2n}(q)=(n-1)/2$. Let $\zeta$ be a primitive root of $n$. Since $n>q$, then there exists an integer $i$ with $1 \leq i <n-1$ such that $q=\zeta^i \bmod{n}$. It is well known that $\ord_n(q)=(n-1)/\gcd(i, n-1)$. As a result, $\gcd(i, n-1)=2$ and $i$ must be even. The desired result then follows. 
\end{proof}

\begin{lemma}\label{lem-RQCbound}
 $\Exp_3(\C(q,n,1; 2))$ has parameters $[n, (n-1)/2, d \geq \sqrt{n}+1]$ and $\Exp_3(\C(q,n,1; 2))^\perp$ has parameters $[n, (n+1)/2, d^\perp \geq \sqrt{n}].$
\end{lemma}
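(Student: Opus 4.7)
The plan is to recognize the two codes as a standard dual pair of quadratic-residue codes and then invoke the classical square-root bound (plus a mild refinement for the expurgated half).

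First I would identify the codes. Since $\ord_n(q)=(n-1)/2$, Euler's criterion forces $q$ to be a quadratic residue modulo $n$, so the $q$-cyclotomic coset $C_1^{(q,n)}$ is a subgroup of $(\Z/n\Z)^*$ of order $(n-1)/2$, hence the unique index-$2$ subgroup $\QR_n$. Writing $\gamma=\beta^2$, which is a primitive $n$-th root of unity, the check polynomial $\m_{\beta^2}(x)$ factors as $\prod_{i\in\QR_n}(x-\gamma^i)$, so $\Exp_3(\C(q,n,1;2))$ has generator polynomial $(x-1)\prod_{i\in\QN_n}(x-\gamma^i)$; this realises it as the $(n-1)/2$-dimensional ``expurgated'' QR code. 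Its dual has generator equal to the reciprocal of $\prod_{i\in\QR_n}(x-\gamma^i)$; since $-\QR_n=\QR_n$ or $\QN_n$ according as $n\equiv 1$ or $3\pmod 4$, $\Exp_3(\C(q,n,1;2))^\perp$ is a $(n+1)/2$-dimensional non-expurgated QR code in either case.

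Second, I would prove $d^\perp\ge\sqrt{n}$ by the standard product-of-codewords trick. For any nonzero $c'$ in the dual and $r$ a quadratic non-residue, the multiplier $\sigma_r\colon x\mapsto x^r$ sends $c'$ into the companion $(n+1)/2$-dimensional QR code, and the cyclic product $c'(x)\,\sigma_r(c')(x)\pmod{x^n-1}$ vanishes at every nontrivial $n$-th root of unity. When $c'(1)\neq 0$ this product equals $(c'(1)^2/n)\bigl(1+x+\cdots+x^{n-1}\bigr)$, of weight exactly $n$; combined with $\operatorname{wt}(c'\,\sigma_r(c'))\le\operatorname{wt}(c')^2$ from the polynomial-product bound, this gives $\operatorname{wt}(c')\ge\sqrt{n}$. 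Since $\mathbf 1$ lies in the dual, a min-weight codeword with $c'(1)\neq 0$ always exists (otherwise one translates by $\alpha\mathbf 1$), so $d^\perp\ge\sqrt n$.

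Finally, for $\Exp_3(\C(q,n,1;2))\subsetneq N$ (with $N$ the companion non-expurgated QR code) the same argument applied to $N$ delivers $d(N)\ge\sqrt n$. To upgrade this to $d\ge\sqrt n+1$ for the expurgated code, I would use the classical refinement: every minimum-weight codeword of $N$ can be arranged to satisfy $c(1)\neq 0$, so it does not sit in the expurgated subcode $\widetilde N=\Exp_3$; therefore $d(\Exp_3)\ge d(N)+1$. Because $n$ is prime, $\sqrt n$ is irrational, so $d(N)\ge\lceil\sqrt n\rceil$ and hence $d(\Exp_3)\ge\lceil\sqrt n\rceil+1>\sqrt n+1$, completing the proof.

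The main obstacle is this last refinement: showing that not every min-weight codeword of $N$ lies in the expurgated part. The plan is to assume the contrary, apply the product argument to the translates $c+\alpha\mathbf 1\in N$ (for each $\alpha\in\gf(q)^*$, these satisfy $(c+\alpha\mathbf 1)(1)=\alpha n\neq 0$ and are permitted by minimality to have weight $\ge d(N)$), and derive a contradiction by combining the weight bound $\operatorname{wt}(c+\alpha\mathbf 1)^2\ge n$ with the counting identity $\sum_{\alpha\neq 0}|\{i:c_i=-\alpha\}|=\operatorname{wt}(c)$, together with the additional divisibility $c(x)\,\sigma_r(c)(x)=(x-1)(x^n-1)h(x)$ forced by $c(1)=\sigma_r(c)(1)=0$; this extra factor of $(x-1)$, not available in the unexpurgated case, is what ultimately supplies the ``$+1$''.
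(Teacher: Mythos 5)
Your identification of $\Exp_3(\C(q,n,1;2))$ as a quadratic-residue code is correct and is in substance exactly what the paper does: the paper's Lemma \ref{lem-2801} shows that $q\in \QR(n)$ (by a primitive-root argument; your Euler's-criterion argument is an equivalent and even slightly cleaner route), after which the check polynomial $\m_{\beta^2}(x)$ is read off as $\prod_{i\in\QR(n)}(x-\beta^{2i})$ (or the $\QN(n)$ product), so that $\Exp_3(\C(q,n,1;2))$ is the $(n-1)/2$-dimensional expurgated QR code and its dual is an odd-like QR code of dimension $(n+1)/2$. At that point the paper simply invokes the square-root bound for quadratic-residue codes (\cite[Theorem 6.6.22]{HP2003}) and stops. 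You instead set out to re-derive that bound, and this is where your argument has a genuine gap.

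The product argument you give controls only odd-like codewords: if $c(1)=0$, then $c(x)\,\sigma_r(c)(x)\equiv 0 \pmod{x^n-1}$, so the inequality $\wt(c)^2\ge \wt(c\,\sigma_r(c))=n$ disappears, and neither of your proposed repairs works. First, translation by $\alpha\bone$ does not preserve weight, so minimality of $c$ gives no control over $\wt(c+\alpha\bone)$; in fact $\wt(c+\alpha\bone)=n-|\{i: c_i=-\alpha\}|\ge n-\wt(c)$ is close to $n$, so the bound $\wt(c+\alpha\bone)^2\ge n$ that you plan to combine with the counting identity is vacuous and produces no contradiction. Second, the unreduced factorization $c(x)\sigma_r(c)(x)=(x-1)(x^n-1)h(x)$ with $\deg h\le n-3$ yields only $\wt(c)^2\ge \wt\bigl((x^{n+1}-x^n-x+1)h\bigr)=2\,\wt((x-1)h)$, and there is no usable lower bound on $\wt((x-1)h)$ since $h$ is unconstrained. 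Consequently both halves of the statement remain unproved in your write-up: the bound $d^\perp\ge\sqrt n$ for the $(n+1)/2$-dimensional dual (whose minimum weight could a priori be attained by an even-like word), and the ``$+1$'' for the expurgated code, which is precisely the assertion that nonzero even-like codewords have weight exceeding $\sqrt n$. That assertion is the nontrivial content of the theorem the paper cites; if you want a self-contained proof you need the genuine even-like-weight argument from the duadic/QR-code literature, not the translation trick.
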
 

\begin{proof}
Note that $\beta^2$ is an $n$-th primitive root of unity in $\gf(q^m)$. Let $\QR(n)$ and $\QN(n)$ denote the set of quadratic residues and nonresidues modulo $n$, respectively. It follows from Lemma \ref{lem-2801} that the canonical factorization of $x^n-1$ over $\gf(q)$ is given by   
\begin{eqnarray}
x^n-1= (x-1) \left( \prod_{i \in \QR(n)} (x-\beta^{2i}) \right)   \left( \prod_{i \in \QN(n)} (x-\beta^{2i}) \right).  
\end{eqnarray} 
Since $\m_{\beta^2}(x)$ is an irreducible divisor of $x^n-1$ and $\beta^2 \not\in \gf(q)$, we have 
$$
\m_{\beta^2}(x)= \prod_{i \in \QR(n)} (x-\beta^{2i}) 
$$ 
or 
$$
\m_{\beta^2}(x)= \prod_{i \in \QN(n)} (x-\beta^{2i}).  
$$ 
Consider now the code $\Exp_3(\C(q,n,1; 2))$, which is the third code associated to the irreducible negacyclic code $\C(q,n,1; 2)$ and has check polynomial $\m_{\beta^2}(x)$. The check polynomial $\m_{\beta^2}(x)$ above shows that  $\Exp_3(\C(q,n,1; 2))$ is a quadratic-residue code. The desired conclusions then follow from \cite[Theorem 6.6.22]{HP2003}. 
\end{proof}

Since $\gcd(2, n)=1$, the desired conclusions of Theorem \ref{thm-mainfeb26} then follow from Theorem \ref{thm-bridge1} and Lemma \ref{lem-RQCbound}. This completes the proof of Theorem \ref{thm-mainfeb26}. Theorem \ref{thm-bridge1} tells us that $\C(q,n,1; 2)$ and $\Exp_3(\C(q,n,1; 2))$ are permutation-equivalent. Hence, the negacyclic code $\C(q,n,1; 2)$ is a counterpart of the corresponding quadratic-residue code $\Exp_3(\C(q,n,1; 2))$. Combining Theorems \ref{thm5} and \ref{thm-mainfeb26}, we obtain the following main result of this section.  

\begin{theorem}\label{thm-june161}
$\Exp_1(\C(q,n,1; 2))$ is an irreducible cyclic code with parameters $[2n, (n-1)/2, d \geq 2(\sqrt{n}+1)]$. 
\end{theorem}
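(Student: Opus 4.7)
The plan is to obtain Theorem \ref{thm-june161} as a direct consequence of Theorem \ref{thm-mainfeb26} (which supplies the square-root bound on the minimum distance of $\C(q,n,1;2)$) together with the structural bridge Theorem \ref{thm5} (which transfers parameters between $\Exp_2(\C)$ and $\Exp_1(\C)$). Since both ingredients are already in place, the argument is essentially bookkeeping; the only nontrivial verification is that the length parameter works out to exactly $2n$, and I expect that small number-theoretic check to be the only place where anything needs to be said.

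First I would set up the relevant indices. With $r = 2$ and $\ell = \ord_n(q) = (n-1)/2$, the quantity $\kappa = r / \gcd((q^\ell - 1)/n, r)$ equals $1$, because $q$ odd and $n$ odd force $(q^\ell - 1)/n$ to be even, so the inner gcd is $2$. As already noted just below Eq.~(\ref{eqn-ding1}), when $\kappa = 1$ the codes $\C(q,n,1;2)$ and $\Exp_2(\C(q,n,1;2))$ coincide. Hence, by Theorem \ref{thm-mainfeb26}, $\Exp_2(\C(q,n,1;2))$ is an $[n, (n-1)/2, d]$ code with $d \geq \sqrt{n}+1$.

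Next I would invoke Theorem \ref{thm5}: if $\Exp_2(\C)$ has parameters $[\gcd((q^\ell-1)/r, n), \ell, d]$ with weight enumerator $W(z)$, then $\Exp_1(\C)$ has parameters $[\gcd(q^\ell-1, rn), \ell, rd]$ with weight enumerator $W(z^r)$. Plugging in $r=2$, $\ell = (n-1)/2$ and $d \geq \sqrt{n}+1$ immediately gives dimension $(n-1)/2$ and minimum distance at least $2(\sqrt{n}+1)$.

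The remaining step is to identify the length $\gcd(q^{(n-1)/2}-1, 2n)$. From $\ord_n(q) = (n-1)/2$ we have $n \mid q^{(n-1)/2}-1$, and since $q$ is odd, $q^{(n-1)/2}-1$ is also even. Because $\gcd(2,n) = 1$, this forces $2n \mid q^{(n-1)/2}-1$, so $\gcd(q^{(n-1)/2}-1, 2n) = 2n$. Finally, irreducibility is automatic from the construction: $\Exp_1(\C(q,n,1;2))$ is by definition the cyclic code with check polynomial $\m_{\beta^\kappa}(x) = \m_\beta(x)$, which is the minimal polynomial of $\beta$ over $\gf(q)$ and hence irreducible. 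This completes the proposed proof.
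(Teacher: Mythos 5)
Your proposal is correct and follows exactly the paper's route: the paper derives Theorem \ref{thm-june161} by "combining Theorems \ref{thm5} and \ref{thm-mainfeb26}," which is precisely your argument, and you correctly supply the small supporting facts the paper leaves implicit (that $\kappa=1$ because $(q^{\ell}-1)/n$ is even, that $\gcd(q^{(n-1)/2}-1,2n)=2n$ so the length is $2n$, and that irreducibility is built into the definition of $\Exp_1$ via the check polynomial $\m_{\beta}(x)$). No gaps.
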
 

\begin{example} 
The irreducible cyclic code $\Exp_1(\C(3,11,1; 2))$ has parameters $[22, 5, 12]$ and is distance-optimal. 
\end{example} 

One question is whether there are infinitely many primes $n$ such that $\ord_{2n}=(n-1)/2$ for a fixed odd prime power $q$. This question may be open. But experimental data indicates that the answer to this question is positive. Table \ref{tab-ding1} contains a list of examples of the code $\C(q,n,1; 2)$, where only the parameters $q$, $n$, $d(\C(q,n,1; 2))$, $d(\C(q,n,1; 2)^\perp)$ are listed, and the first best distance and the second best distance denote the minimum distance of the best linear code of length $n$ and dimension $(n-1)/2$ and the best linear code of length $n$ and dimension $(n+1)/2$ over $\gf(q)$, respectively, maintained at http://www.codetables.de/. The experimental data in Table \ref{tab-ding1} shows that the codes $\C(q,n,1; 2)$ and $\C(q,n,1; 2)^\perp$ are optimal or the best known in every case except one.

\begin{table}[ht]
\begin{center}
\caption{The code $\C(q,n,1; 2)$ and its dual}\label{tab-ding1}
\begin{tabular}{rrrrrr} \hline
$q$  &  $n$  & $d$ & best distance & $d^\perp$ &  best distance  \\ \hline
$3$  &   $11$     &  $6$   &  $6$, optimal              &  $5$             &  $5$, optimal \\  \hline 
$3$  &   $23$      &  $9$  &  $9$,  optimal              &  $8$             &  $8$, optimal \\  \hline 
$3$  &   $37$     &  $11$   &  $12$             &  $10$             &  $10$ \\  \hline 
$3$  &   $47$      &  $15$  &   $15$             &  $14$             &  $14$ \\  \hline 
$3$  &   $59$     &  $18$   &  $18$              &  $17$             &  $17$ \\  \hline 
$3$    & $71$    &  $$ 18 &   $18$              &  $17$             &  $17$ \\  \hline 
$5$   & $11$    &  $6$  &   $6$, optimal              &  $5$             &  $5$, optimal \\  \hline 
$5$    & $19$    &  $8$  &   $8$              &  $7$             &  $7$ \\  \hline 
$5$   & $29$    &  $12$  &   $12$, optimal              &  $11$             &  $11$, optimal  \\  \hline 
$5$    & $41$    &  $14$  &   $14$              &  $13$             &  $13$ \\  \hline 
$7$   & $31$    &  $13$  &   $13$             &  $12$             &  $12$ \\  \hline 
$7$    & $47$    &  $17$  &   $17$              &  $16$             &  $16$ \\  \hline 
\end{tabular}
\end{center}
\end{table}

\section{A family of irreducible cyclic codes over $\gf(q)$ with parameters $[(q-1)n, (n-1)/2, d \geq (q-1)(\sqrt{n}+1)]$}\label{sec-ding2} 

Throughout this section, let $q>2$ be a prime power and let $n>q$ be an odd prime such that $m:=\ord_{(q-1)n}(q)=(n-1)/2$. Note that $q<n$ and $n$ is a prime. We have $\gcd(q-1,n)=1$ and $\kappa=1$. Consequently, $\ord_n(q)=\ord_{(q-1)n}(q)=(n-1)/2$.

Let $\alpha$ be a primitive element of $\gf(q^m)$ and let $\beta=\alpha^{(q^m-1)/(q-1)n}$. 
Put $\lambda=\beta^n$. Then $\lambda$ is a primitive element of $\gf(q)$. Let $h$ denote the multiplicative inverse of $n$ modulo $q-1$, where $1 \leq h \leq q-2$. Then we have  
$$
C_{h n}^{(q, (q-1)n)}=\{h n\}
$$ 
and 
$$
C_1^{(q,(q-1)n)}=\{1, q, q^2, \ldots, q^{m-1}\} \bmod{(q-1)n}.  
$$ 
Take any $t$ from $\{(q-1)i+1: 0 \leq i \leq n-1\} \setminus 
(C_1^{(q,(q-1)n)} \cup C_{h n}^{(q, (q-1)n)})$. It can be verified that the cosets $C_1^{(q,(q-1)n)}$, $C_t^{(q,(q-1)n)}$ and $C_{h n}^{(q,(q-1)n)}$ form a partition of the set 
$
 \{(q-1)i+1: 0 \leq i \leq n-1\}.  
$ 
Then for $i \in \{1, t\}$ the minimal polynomial  
$$
\m_{\beta^i}(x) =\prod_{j \in C_i^{(q,(q-1)n)}} (x-\beta^j). 
$$  
Let $\C(q,n,i)$ denote the $\lambda$-constacyclic code of length $n$ over $\gf(q)$ with check polynomial $ \m_{\beta^i}(x)$ for $i \in \{1, t\}$. Then the dimension of  $\C(q,n,i)$ equals $(n-1)/2$ for $i \in \{1, t\}$. Consequently, its dual has dimension  $(n+1)/2$. We now prove the following theorem.  

\begin{theorem}\label{thm-mainfeb27}
Let notation and assumptions be the same as before. Then the irreducible constacyclic code $\C(q,n,1)$ has parameters $[n, (n-1)/2, d]$ with $d \geq \sqrt{n}+1$ and $\C(q,n,1)^\perp$ has parameters $[n, (n+1)/2, d^\perp]$ with $d^\perp \geq \sqrt{n}$. 
\end{theorem}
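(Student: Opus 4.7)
The plan is to imitate closely the proof of Theorem~\ref{thm-mainfeb26}, relying on Theorem~\ref{thm-bridge1} to transfer the minimum-distance bound from an associated quadratic-residue cyclic code onto the irreducible $\lambda$-constacyclic code $\C(q,n,1)$.

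First I would check the key hypothesis of Theorem~\ref{thm-bridge1}: here $r=\ord(\lambda)=q-1$, and since $n$ is a prime with $n>q$, we have $\gcd(q-1,n)=1$ and hence $\kappa=1$. Theorem~\ref{thm-bridge1} therefore gives that $\C(q,n,1)$ is permutation-equivalent to the irreducible cyclic code $\Exp_3(\C(q,n,1))$ of length $n$ over $\gf(q)$ whose check polynomial is $\m_{\beta^{q-1}}(x)$. Because $\beta$ has multiplicative order $(q-1)n$ and $\gcd(q-1,n)=1$, the element $\gamma:=\beta^{q-1}$ is a primitive $n$-th root of unity in $\gf(q^m)$.

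Next I would verify that $\Exp_3(\C(q,n,1))$ is a quadratic-residue code. Since $\ord_n(q)=\ord_{(q-1)n}(q)=(n-1)/2$ and $n>q$, the argument used in the proof of Lemma~\ref{lem-2801} applies verbatim (only $\ord_n(q)=(n-1)/2$ and $n>q$ are really used there): if $\zeta$ is a primitive root modulo $n$ and $q\equiv\zeta^i\pmod n$ with $1\le i<n-1$, then $(n-1)/\gcd(i,n-1)=\ord_n(q)=(n-1)/2$ forces $i$ to be even, so $q\in\QR(n)$. Consequently the $q$-cyclotomic coset $C_1^{(q,n)}=\{1,q,q^2,\dots,q^{m-1}\}\bmod n$ is contained in $\QR(n)$, and since it has size $m=(n-1)/2=|\QR(n)|$, it coincides with $\QR(n)$. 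Thus
\[
\m_{\gamma}(x)=\prod_{i\in\QR(n)}(x-\gamma^i),
\]
which identifies $\Exp_3(\C(q,n,1))$ as a quadratic-residue code of length $n$ over $\gf(q)$.

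Finally, I would invoke the square-root bound for quadratic-residue codes (\cite[Theorem 6.6.22]{HP2003}), exactly as in the proof of Lemma~\ref{lem-RQCbound}, to conclude that $\Exp_3(\C(q,n,1))$ has parameters $[n,(n-1)/2,d\ge\sqrt{n}+1]$ and its dual has parameters $[n,(n+1)/2,d^\perp\ge\sqrt{n}]$. Since permutation equivalence preserves the weight distribution, the same bounds transfer to $\C(q,n,1)$ and $\C(q,n,1)^\perp$, giving the claimed statement. The only genuinely nontrivial step is showing $q\in\QR(n)$, and that is handled by reusing the proof of Lemma~\ref{lem-2801}; everything else is a routine unpacking of the trace representation and an application of Theorem~\ref{thm-bridge1}. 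I do not expect any real obstacle, provided one is careful that the hypotheses of Theorem~\ref{thm-bridge1} ($\gcd(r,n)=1$) and of the auxiliary lemma ($\ord_n(q)=(n-1)/2$, $n>q$) are both in force; both follow immediately from the standing assumptions of the section.
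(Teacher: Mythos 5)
Your proposal is correct and follows essentially the same route as the paper: the paper proves $q\in\QR(n)$ (its Lemma~\ref{lem-2802}, identical in argument to Lemma~\ref{lem-2801}), identifies $\Exp_3(\C(q,n,1))$ as a quadratic-residue code with check polynomial $\m_{\beta^{q-1}}(x)$, applies the square-root bound from \cite[Theorem 6.6.22]{HP2003}, and transfers the parameters to $\C(q,n,1)$ via Theorem~\ref{thm-bridge1} using $\gcd(q-1,n)=1$. Your explicit identification $C_1^{(q,n)}=\QR(n)$ is a slightly more direct way of pinning down the check polynomial than the paper's ``one of the two factors'' argument, but the substance is the same.
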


To prove Theorem \ref{thm-mainfeb27}, we need to do some preparations below. We first prove the following lemma. 

\begin{lemma}\label{lem-2802}
Let $q$ be a prime power and let $n>q$ be an odd prime  such that $\ord_{(q-1)n}(q)=(n-1)/2$. Then $q$ is a quadratic residue modulo $n$.  
\end{lemma}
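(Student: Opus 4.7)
The plan is to mimic the proof of Lemma~\ref{lem-2801} almost verbatim, the only new ingredient being that we must first reduce the hypothesis on $\ord_{(q-1)n}(q)$ to a statement about $\ord_n(q)$.

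First I would observe that $\gcd(q-1,n)=1$: indeed $n$ is a prime and $n>q>q-1$, so $n\nmid (q-1)$ and thus the gcd is $1$. By the Chinese Remainder Theorem,
\[
\Z/(q-1)n\Z \;\cong\; \Z/(q-1)\Z \,\times\, \Z/n\Z,
\]
and under this isomorphism $q$ corresponds to the pair $(q\bmod(q-1),\,q\bmod n) = (1,\,q\bmod n)$, because $q\equiv 1\pmod{q-1}$. The order of $q$ in the product is the lcm of the orders of the components, i.e.\ $\lcm(1,\ord_n(q))=\ord_n(q)$. Hence $\ord_n(q)=\ord_{(q-1)n}(q)=(n-1)/2$.

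Next I would repeat the argument of Lemma~\ref{lem-2801}. Since $n>q\geq 2$, the integer $q$ reduces to a nonzero element of $\Z/n\Z$ that is not $1$ (as $1<q<n$ and $n$ is prime). Let $\zeta$ be a primitive root modulo $n$; then $q\equiv \zeta^i \pmod n$ for some $i$ with $1\leq i\leq n-2$. The standard formula $\ord_n(\zeta^i)=(n-1)/\gcd(i,n-1)$ combined with the first step gives
\[
\frac{n-1}{\gcd(i,n-1)}=\frac{n-1}{2}, \qquad\text{so}\qquad \gcd(i,n-1)=2.
\]
Because $n$ is odd, $n-1$ is even, and $\gcd(i,n-1)=2$ forces $i$ to be even. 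Writing $i=2j$ yields $q\equiv (\zeta^j)^2\pmod n$, so $q$ is a quadratic residue modulo $n$, as claimed.

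No step here is delicate: the CRT reduction is routine once one notices $q\equiv 1\pmod{q-1}$, and the rest is a direct transcription of the argument already used for $\ord_{2n}(q)$. The only mildly subtle point to record is that the hypothesis $n>q$ is exactly what guarantees $q\not\equiv 1\pmod n$ (so that the exponent $i$ is nonzero and the gcd computation is legitimate), paralleling the role played by $n>q$ in Lemma~\ref{lem-2801}.
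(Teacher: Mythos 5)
Your proof is correct and follows essentially the same route as the paper: reduce $\ord_{(q-1)n}(q)$ to $\ord_n(q)$, then write $q=\zeta^i \bmod n$ for a primitive root $\zeta$ and deduce $\gcd(i,n-1)=2$, hence $i$ even. The only cosmetic difference is that you justify $\ord_{(q-1)n}(q)=\ord_n(q)$ with a self-contained CRT argument using $q\equiv 1\pmod{q-1}$, whereas the paper cites the computation $\kappa=1$ from its Lemma~\ref{lem1} at the start of the section; both are routine and equivalent here.
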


\begin{proof}
It was shown at the very beginning of this section that  $\ord_n(q) =\ord_{(q-1)n}(q)=(n-1)/2$. Let $\zeta$ be a primitive root of $n$. Since $n>q$, then there exists an integer $i$ with $1 \leq i <n-1$ such that $q=\zeta^i \bmod{n}$. It is well known that $\ord_n(q)=(n-1)/\gcd(i, n-1)$. As a result, $\gcd(i, n-1)=2$ and $i$ must be even. The desired result then follows. 
\end{proof}  

\begin{lemma}\label{lem-RQCboundq1}
 $\Exp_3(\C(q,n,1))$ has parameters $[n, (n-1)/2, d \geq \sqrt{n}+1]$ and 
 $\Exp_3(\C(q,n,1))^\perp$ has parameters $[n, (n+1)/2, d^\perp \geq \sqrt{n}]$. 
\end{lemma}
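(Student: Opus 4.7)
The plan is to prove Lemma \ref{lem-RQCboundq1} by showing that $\Exp_3(\C(q,n,1))$ is a quadratic-residue code over $\gf(q)$ of length $n$, and then invoking the standard square-root bound for such codes. This parallels exactly the proof of Lemma \ref{lem-RQCbound} in Section \ref{sec-ding1}; the novelty lies merely in adapting the argument from the negacyclic setting ($r=2$) to the $(q-1)$-th-root-of-unity setting.

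First, I would identify the $n$-th root of unity underlying $\Exp_3(\C(q,n,1))$. Since $\beta$ has order $(q-1)n$ in $\gf(q^m)^*$, the element $\eta := \beta^{q-1}$ is a primitive $n$-th root of unity, and because $n>q$ is prime, $n \nmid q-1$, so $\eta \notin \gf(q)$. By Lemma \ref{lem-2802}, $q$ is a quadratic residue modulo $n$, so the Frobenius action $i \mapsto qi \bmod n$ preserves each of $\QR(n)$ and $\QN(n)$; combined with $\ord_n(q)=(n-1)/2$, these two sets are exactly the two non-trivial $q$-cyclotomic cosets modulo $n$. Hence the canonical factorization of $x^n-1$ over $\gf(q)$ reads
$$x^n-1 = (x-1) \left( \prod_{i \in \QR(n)} (x-\eta^i) \right) \left( \prod_{i \in \QN(n)} (x-\eta^i) \right).$$
Since the check polynomial of $\Exp_3(\C(q,n,1))$ is $\m_{\beta^{q-1}}(x) = \m_{\eta}(x)$, which is an irreducible factor of $x^n-1$ with $\eta \notin \gf(q)$, it must coincide with one of the two non-trivial factors above.

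Consequently, $\Exp_3(\C(q,n,1))$ is a quadratic-residue code of length $n$ over $\gf(q)$, and the square-root bound for quadratic-residue codes (Theorem 6.6.22 in \cite{HP2003}) immediately yields $d \geq \sqrt{n}+1$ for $\Exp_3(\C(q,n,1))$ and $d^\perp \geq \sqrt{n}$ for its dual, which are exactly the claimed parameters.

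The only real subtlety will be verifying the two technical ingredients flagged above: that $\eta$ has order exactly $n$ (rather than a proper divisor), and that the Frobenius partitions $\{1,2,\ldots,n-1\}$ into precisely $\QR(n)$ and $\QN(n)$. The first follows from $\gcd(q-1,(q-1)n) = q-1$, which forces the order of $\eta$ to be $(q-1)n/(q-1)=n$; the second is a standard consequence of Lemma \ref{lem-2802} together with the hypothesis $\ord_n(q) = (n-1)/2$. Neither step should present serious difficulty, so the proof ultimately reduces to invoking the QR-code square-root bound, exactly as was done in the companion Lemma \ref{lem-RQCbound}.
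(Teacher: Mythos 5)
Your proposal is correct and follows essentially the same route as the paper's own proof: identify $\beta^{q-1}$ as a primitive $n$-th root of unity, use Lemma \ref{lem-2802} to obtain the quadratic-residue factorization of $x^n-1$ over $\gf(q)$, conclude that the check polynomial $\m_{\beta^{q-1}}(x)$ is one of the two nontrivial factors so that $\Exp_3(\C(q,n,1))$ is a quadratic-residue code, and then invoke the square-root bound of \cite[Theorem 6.6.22]{HP2003}. The extra verifications you flag (the exact order of $\eta$ and the identification of the two nontrivial cyclotomic cosets with $\QR(n)$ and $\QN(n)$) are sound and only make explicit what the paper leaves implicit.
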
 

\begin{proof}
Let $\QR(n)$ and $\QN(n)$ denote the set of quadratic residues and nonresidues modulo $n$, respectively. Note that $\beta^{q-1}$ is an $n$-th primitive root of unity in $\gf(q^m)$. It follows from Lemma \ref{lem-2802} that the canonical factorization of $x^n-1$ over $\gf(q)$ is given by   
\begin{eqnarray}
x^n-1= (x-1) \left( \prod_{i \in \QR(n)} (x-\beta^{(q-1)i}) \right)   \left( \prod_{i \in \QN(n)} (x-\beta^{(q-1)i}) \right).  
\end{eqnarray} 
Since $\m_{\beta^{q-1}}(x)$ is an irreducible divisor of $x^n-1$ and $\beta^{q-1} \not\in \gf(q)$, 
we have 
$$
\m_{\beta^{q-1}}(x)= \prod_{i \in \QR(n)} (x-\beta^{(q-1)i}) 
$$ 
or 
$$
\m_{\beta^{q-1}}(x)= \prod_{i \in \QN(n)} (x-\beta^{(q-1)i}).  
$$ 
Consider now the code $\Exp_3(\C(q,n,1))$, which is the third code associated to the irreducible constacyclic code $\C(q,n,1)$ and has check polynomial $\m_{\beta^{q-1}}(x)$. The check polynomial $\m_{\beta^{q-1}}(x)$ above shows that $\Exp_3(\C(q,n,1))$ is a quadratic-residue code. The desired conclusions then follow from \cite[Theorem 6.6.22]{HP2003}. 
\end{proof}
 
By assumption $n>q$ and $n$ is a prime. Hence, $\gcd(q-1, n)=1$. The desired conclusions of Theorem \ref{thm-mainfeb27} then follow from Theorem \ref{thm-bridge1} and Lemma \ref{lem-RQCboundq1}. Theorem \ref{thm-bridge1} tells us that $\C(q,n,1)$ and $\Exp_3(\C(q,n,1))$ are permutation-equivalent. Hence, the constacyclic code $\C(q,n,1)$ is a counterpart of the corresponding quadratic-residue code $\Exp_3(\C(q,n,1))$. Combining Theorems  \ref{thm5} and \ref{thm-mainfeb27}, we get the following main result of this section.  

\begin{theorem}\label{thm-june162} 
The set $\Exp_1(\C(q,n,1))$ is an irreducible cyclic code over $\gf(q)$ with parameters $[(q-1)n, (n-1)/2, d \geq (q-1)(\sqrt{n}+1)]$. 
\end{theorem}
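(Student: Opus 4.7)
The plan is to derive Theorem \ref{thm-june162} as an essentially immediate consequence of Theorem \ref{thm5} applied to the parameters of $\C(q,n,1)$ established in Theorem \ref{thm-mainfeb27}. The main content of the theorem has in fact already been done: the square-root bound for $\C(q,n,1)$ has been transferred from the quadratic-residue code $\Exp_3(\C(q,n,1))$ via Theorem \ref{thm-bridge1}. What remains is just to pass from $\Exp_2(\C) = \C(q,n,1)$ to $\Exp_1(\C)$, which scales all weights by $r = q-1$.

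First I would unpack the numerical parameters in the setup of Section \ref{sec-ding2}. We have $\ell = \ord_n(q) = (n-1)/2$, the order of $\lambda$ is $r = q-1$, and because $n > q$ is prime we have $\gcd(q-1, n) = 1$. Combined with $n \mid q^{(n-1)/2} - 1$ and $(q-1) \mid q^{(n-1)/2} - 1$, this gives $(q-1)n \mid q^{(n-1)/2} - 1$, whence $(q-1) \mid (q^\ell - 1)/n$ and therefore $\kappa = r/\gcd((q^\ell - 1)/n,\, r) = 1$. In particular, in the notation of Section \ref{sec3}, $\C(q,n,1)$ coincides with $\Exp_2(\C)$.

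Second, I would read off the parameters of $\Exp_1(\C(q,n,1))$ from its definition in Section \ref{sec3}. Its length is $\gcd(q^\ell - 1,\, rn) = \gcd(q^{(n-1)/2} - 1,\, (q-1)n)$, which equals $(q-1)n$ by the divisibility observations above. Its check polynomial is $\m_{\beta^\kappa}(x) = \m_\beta(x)$, which is irreducible of degree $\ell = (n-1)/2$; this gives both irreducibility and the stated dimension.

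Third, I would invoke Theorem \ref{thm5}: since $\Exp_2(\C) = \C(q,n,1)$ is, by Theorem \ref{thm-mainfeb27}, a $[n, (n-1)/2, d]$ code with $d \geq \sqrt{n} + 1$ and some weight enumerator $W(z)$, the theorem gives that $\Exp_1(\C)$ is a $[\,(q-1)n,\, (n-1)/2,\, rd\,]$ code with weight enumerator $W(z^r)$. Substituting $r = q-1$ and the bound on $d$ yields the advertised minimum distance $\geq (q-1)(\sqrt{n}+1)$. There is no genuine obstacle in this step; the only point requiring a line of care is the verification $\kappa = 1$ together with the gcd simplification for the length, and both hinge on $\gcd(q-1, n) = 1$.
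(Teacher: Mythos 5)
Your proposal is correct and follows the paper's own route exactly: the paper obtains Theorem \ref{thm-june162} by combining Theorem \ref{thm-mainfeb27} (the square-root bound for $\C(q,n,1)$, itself transferred from the quadratic-residue code $\Exp_3(\C(q,n,1))$ via Theorem \ref{thm-bridge1}) with Theorem \ref{thm5}, which rescales length and weights by $r=q-1$. Your added verification that $\kappa=1$ and that $\gcd(q^\ell-1,(q-1)n)=(q-1)n$, both resting on $\gcd(q-1,n)=1$, correctly fills in details the paper leaves implicit.
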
 

\begin{example} 
The irreducible cyclic code $\Exp_1(\C(3,11,1))$ has parameters $[22, 5, 12]$ and is distance-optimal. 
\end{example} 

Similarly, one would ask if there are infinitely many primes $n$ such that $\ord_{(q-1)n}=(n-1)/2$ for a fixed prime power $q>2$. This question may be open. But experimental data indicates that the answer to this question is positive. Table \ref{tab-ding2} contains a list of examples of the code $\C(q,n,1)$, where only the parameters $q$, $n$, $d(\C(q,n,1))$, $d(\C(q,n,1)^\perp)$ are listed, and the first best distance and the second best distance denote the minimum distance of the best linear code of length $n$ and dimension $(n-1)/2$ and the best linear code of length $n$ and dimension $(n+1)/2$ over $\gf(q)$, respectively, maintained at http://www.codetables.de/. The experimental data in Table \ref{tab-ding2}  shows that the codes $\C(q,n,1)$ and $\C(q,n,1)^\perp$ are optimal or the best known in every case except three cases. 

 \begin{table}[ht]
\begin{center}
\caption{The code $\C(q,n,1)$ and its dual}\label{tab-ding2}
\begin{tabular}{rrrrrr} \hline
$q$  &  $n$  & $d$ & best distance & $d^\perp$ &  best distance  \\ \hline
$4$    & $7$    &  $4$  &   optimal              &  $3$             &  optimal \\  \hline 
$4$   & $11$    &  $6$  &   optimal              &  $5$             &  optimal \\  \hline 
$4$    & $13$    &  $6$  &   optimal              &  $5$             &  optimal \\  \hline 
$4$   & $19$    &  $8$  &   optimal              &  $7$             &  optimal \\  \hline 
$4$    & $23$    &  $8$  &   $9$              &  $7$             &  $8$ \\  \hline 
$4$   & $29$    &  $12$  &   optimal              &  $11$             &  optimal \\  \hline 
$4$    & $37$    &  $12$  &   $12$              &  $11$             &  optimal \\  \hline 
$4$   & $47$    &  $12$  &   $14$             &  $11$             &  $13$ \\  \hline 
$4$    & $59$    &  $15$  &   $17$              &  $14$             &  $16$ \\  \hline 
$4$   & $61$    &  $18$  &   $18$              &  $17$             &  $17$ \\  \hline 
$5$    & $11$    &  $6$  &   optimal              &  $5$             &  optimal \\  \hline 
$5$   & $19$    &  $8$  &   $8$              &  $7$             &  $7$ \\  \hline 
$5$    & $29$    &  $12$  &   optimal             &  $11$             &  optimal \\  \hline 
$5$   & $41$    &  $14$  &   $14$              &  $13$             &  $13$  \\  \hline 
$7$   & $31$    &  $13$  &   $13$              &  $12$             &  $12$ \\  \hline 
\end{tabular}
\end{center}
\end{table}

\section{The objective of studying the two families of constacyclic codes in Sections \ref{sec-ding1} and \ref{sec-ding2}} \label{sec:7}

Recall that the constacyclic codes $\C(q, n, 1; 2)$ studied in Section  \ref{sec-ding1} and  $\C(q, n, 1)$ treated in Section  \ref{sec-ding2} were proved to be permutation-equivalent to a quadratic-residue code under the condition $\ord_{2n}(q)=(n-1)/2$ 
 and $\ord_{(q-1)n}(q)=(n-1)/2$, respectively. Then one would question the objective of studying 
the constacyclic codes $\C(q, n, 1; 2)$ and $\C(q, n, 1)$. The major objective of doing this is to obtain the 
parameters of the two irreducible cyclic codes $\Exp_1(\C(q,n,1;2))$ documented in Theorem \ref{thm-june161}  and 
$\Exp_1(\C(q,n,1))$ documented in Theorem \ref{thm-june162}, where the constacyclic code  $\C(q, n, 1; 2)$ (or  $\C(q, n, 1)$)  serves as a bridge between the irrecucible cyclic codes $\Exp_1(\C(q,n,1;2))$ and $\Exp_3(\C(q,n,1;2))$ (or $\Exp_1(\C(q,n,1))$ and $\Exp_3(\C(q,n,1))$) of length $2n$ and $n$ (or $(q-1)n$  and $n$), respectively.  

Recall that we assocaited to  each irreducible $\lambda$-constacyclic code $\C$ of length  $n$ an irreducible cyclic code  
$\Exp_1(\C)$ of length $rn/\kappa$ and another  irreducible cyclic code $\Exp_3(\C)$ of length $n$. By definition, the two cyclic codes $\Exp_1(\C)$ and $\Exp_3(\C)$ are related to some extent in general via the bridge code $\C$, and are closely 
related in some special cases. One may be able to obtain results about $\Exp_1(\C)$ from those about $\Exp_3(\C)$ or the other way 
around in some special cases via the bridge constacyclic code $\C$.  Hence, studying the brdging constacyclic cocde $\C$ is 
a key step in this approach. There are other ways to associate a cyclic code to a constacycic codes \cite{CFLL12}.

\section{Summary and concluding remarks}\label{sec4}
   	 
 The main contributions of this paper are the following:
 \begin{enumerate}
 \item The relations among the weight distributions of the irreducible constacyclic code $\C$ and the related code 
 $\Exp_1(\C)$ and the irreducible constacyclic code $\Exp_2(\C)$  were discovered (see Theorem \ref{thm5}), and the relations among the three codes were found (see Theorem \ref{thm4}). 
  \item For the irreducible constacyclic code $\C$ in Eq. (\ref{eq1}), another irreducible cyclic code $\Exp_3(\C)$ of the same length was assocaoted. Relations 
 between the two codes in a special case were established in Theorem \ref{thm-bridge1}, which serves as a bridge between 
 irreducible constacyclic codes and irreducible cyclic codes.  
 \item The weight distributions of several families of irreducible constacyclic codes were settled. Several families of constacyclic codes with a few weights were produced. These results were documented in Theorem \ref{thm9} and its corollaries. 
 The dual codes of these irreducible constacyclic codes were also studied (see Theorem \ref{thm9} and its corollaries). Several families of constacyclic codes with optimal parameters were presented.  
  \item A family of irreducible cyclic codes over $\gf(q)$ with parameters $[2n, (n-1)/2, d \geq 2(\sqrt{n}+1)]$ was constructed in Section 
 \ref{sec-ding1}. 
   \item A family of irreducible cyclic codes over $\gf(q)$ with parameters $[(q-1)n, (n-1)/2, d \geq (q-1)(\sqrt{n}+1)]$ was constructed in Section 
 \ref{sec-ding2}. 
 \end{enumerate} 
 
It was shown in \cite[Theorem 3.6]{SR2018} that the weight distribution of the irreducible constacyclic code $\C$ in Eq. 
(\ref{eq1}) can be expressed in terms of the Gaussian periods of order $L$, where $L$ was defined in Eq. (\ref{eqn-indianL}).  However, 
Gaussian periods were evaluated only in a few cases. Consequently, the weigh distribution of the irreducible constacyclic code 
$\C$ in Eq. (\ref{eq1}) was known only in a few cases \cite{SR2018}. As explained in Example \ref{exam-comparison}, 
Theorem \ref{thm9} complements the work in \cite{SR2018}.  Hence, Corollaries \ref{cor11-1} and \ref{cor12-1} of Theorem \ref{thm9} may not be easily derived from the results in 
\cite{SR2018}. 
This paper treated the  code $\C$ in Eq. (\ref{eq1}) with length $n$ of several special forms only, 
while reference \cite{SR2018} documented some general results of $\C$ with general length $n$. Hence, the focuses of this 
paper and \cite{SR2018} are different. It may be extremely difficult to use the results in \cite{SR2018} to study the codes 
and their duals presented in Sections \ref{sec-ding1} and \ref{sec-ding2}, as Gaussian periods of such large orders are not 
evaluated. Another difference between this paper and \cite{SR2018} is that the dual code $\C^\perp$ was studied in this paper, 
while the dual code of $\C$ was not touched in \cite{SR2018}.  

The evaluation of Gaussian periods is related to several areas of number theory and is known to be a hard problem. 
Hence, it is very hard to determine the weight distributions of irreducible cyclic codes \cite{dingyang2013} and irreducible constacyclic codes \cite{SR2018}. The reader is cordially invited to make progress on this topic. 
  
\section{Acknowledgements} 
The authors are very grateful to Prof. Anuradha Sharma for providing reference \cite{SR2018}. All the code examples in this paper 
were computed with the Magma software package.

\end{document}